\documentclass[UKenglish]{eptcs}

\usepackage[usenames,dvipsnames]{color}

\usepackage{times}
\usepackage[latin1]{inputenc}
\usepackage{pslatex}
\usepackage{amsmath}
\usepackage{amssymb}
\usepackage{amsfonts}
\usepackage{amsthm}
\usepackage{hyperref}
\usepackage{breakurl}
\usepackage{prooftree}
\usepackage{txfonts}
\usepackage{lscape}
\usepackage[all]{xy}
\usepackage{textcomp}
\usepackage{breakurl}

\pagestyle{plain}

%\iftrue
%\renewcommand\refname{REFERENCES}
%\usepackage{titlesec}
%\titleformat{\section}[runin]
%  {\bfseries}
%  {\thesection}{1em}{}[.]
%\titlespacing{\section}
%  {0pt}
%  {1em}
%  {1em}
%\fi

%\usepackage{float}
%\floatstyle{ruled}
%\restylefloat{table}
%\restylefloat{figure}
%\def\topfraction{0.77}

\usepackage{color}

\newif\ifmm
%\mmfalse %without color
\mmtrue %with color

\newif\ifmc
\mcfalse %without color
\mctrue %with color

\newtheorem{fact}{Fact}[section]

\newtheorem{lemma}[fact]{Lemma}

\newtheorem{corollary}[fact]{Corollary}

\newtheorem{definition}[fact]{Definition}

\newtheorem{theorem}[fact]{Theorem}

\newcommand{\dup}{\! : \!}
\newcommand{\fhp}{\text{\small FHP}}   %{{\Large{\textsc{fhp}}}}
\newcommand{\fhi}{\text{\small FHI}}   %{{\Large{\textsc{fhi}}}}

%tipi qualunque
\newcommand{\tA}{\sigma}       % formula names
\newcommand{\tB}{\tau}
\newcommand{\tC}{\rho}
\newcommand{\tD}{\zeta}

%\newcommand{\tH}{\varpi}
%intersezioni di freccie e atomi
\newcommand{\tM}{\mu}
\newcommand{\tN}{\nu}

\newcommand{\te}{\xi}

%unioni di freccie e atomi
\newcommand{\ti}{\chi}

\newcommand{\tu}{\omega}

%freccie o atomi
\newcommand{\tO}{\alpha}
\newcommand{\tP}{\beta}
\newcommand{\tQ}{\eta}
\newcommand{\tY}{\theta}

\newcommand{\B}{\Gamma}

\newcommand{\db}{\displaystyle}

\newcommand{\seq}[2]{\langle#1,\ldots,#2\rangle}
\newcommand{\seqs}[1]{\langle#1\rangle}
\newcommand{\tsi}{\sim}

\newcommand{\tS}{\sigma}       % formula names
\newcommand{\tT}{\tau}
\newcommand{\tR}{\rho}

\newcommand{\der}[3]{#1\vdash #2\dup #3}

\newcommand{\red}{\Longrightarrow}

\newcommand{\C}[1]{{\cal C}[#1]}

\newcommand{\iso}{\approx}

\newcommand{\id}{{\sf Id}}
\newcommand{\tob}{\;{_\beta}\!\!\longleftarrow}
\newcommand{\mtob}{\;^*\!\!\!\!{_\beta}\!\!\longleftarrow}

\newcommand{\labelx}[1]{\label{#1}}
%\newcommand{\labelx}[1]{(#1)\label{#1}}

% paths

\newcommand{\isos}{\approx_{\sf s}}

\newcommand{\nf}[1]{#1\!\!\downarrow}

% relazioni <=

      %{\lesssim}
%\newcommand{\md}{\leq^{\diamondsuit}\!\!}

%\newcommand{\md}{\leq^{\text{\large$\diamond$}}\!\!}         %
%\newcommand{\md}{\leq^\lozenge\!}
%versione rivista

\newcommand{\myformula}[1]{\\[0.5pt]\centerline{#1}}

\newcommand{\myformulaE}[1]{\\[4pt]\centerline{#1}\\[4pt]}

\newcommand{\myformulaB}[1]{\\[2pt]\centerline{#1}}

\newcommand{\rr}{\Rightarrow}

\newcommand{\Co}{\texttt{A}}
\newcommand{\esim}{\cong}

 % Name of the event you are submitting to
            % Not needed if you use pdflatex only.

\title{On Isomorphism of  ``Functional''\\ Intersection and Union Types\footnote{
 This work was partially supported by EU Collaborative project  ASCENS 257414, ICT COST Action IC1201 BETTY, MIUR PRIN Project CINA Prot. 2010LHT4KM and Torino University/Compagnia San Paolo Project SALT. %\today
}}
\author{Mario Coppo\quad
%\institute{Dip. di Informatica, Univ. di Torino, Italy}
Mariangiola Dezani-Ciancaglini\quad
%\institute{Dip. di Informatica, Univ. di Torino, Italy}
%\institute{Dipartimento di Informatica Universit\`a di Torino,
%corso Svizzera 185, 10149 Torino, Italy}
Ines Margaria\quad%\hspace{20mm}
%\institute{Dip. di Informatica, Univ. di Torino, Italy}
%\institute{Dipartimento di Informatica Universit\`a di Torino,
%corso Svizzera 185, 10149 Torino, Italy}
Maddalena Zacchi
%\institute{Dip. di Informatica, Univ. di Torino, Italy}
\institute{Dipartimento di Informatica Universit\`a di Torino,
corso Svizzera 185, 10149 Torino, Italy}
}

\begin{document}

\maketitle

%\today
%
%\bigskip

\begin{abstract}
 Type isomorphism is useful for retrieving library components, since a function in a library can have a type different from, but isomorphic to,  the one expected by the user. Moreover type isomorphism gives for free the coercion required to include the function in the user program with the right type. The present paper faces the problem of type isomorphism in a system with intersection  and union types. In the presence of intersection and union, isomorphism is not a congruence and cannot be characterised in an equational way. A characterisation can still be given, quite complicated by the interference between functional and non functional types. This drawback is faced in the paper by interpreting each  atomic type as the set of functions mapping any argument into the interpretation of the type itself. This choice has been suggested by the initial projection of Scott's inverse limit $\lambda$-model.
The main result of this paper is a condition assuring type isomorphism, based on
an isomorphism preserving reduction.

\end{abstract}

\section{Introduction}

In a typed $\lambda$-calculus the notion of \emph{type isomorphism}  is a particularisation of the general notion of isomorphism in category theory, with the requirement that the morphisms proving the isomorphism are $\lambda$-definable.
More specifically, two \emph{types} $\tS$ and $\tT$ are \emph{isomorphic} if there are two $\lambda$-terms
$M$ and $N$ of types $\tS\to\tT$ and $\tT\to\tS$, respectively, such that $M\circ N$ is  $\beta\eta$-equal to the identity at type $\tT$ and  $N\circ M$ is  $\beta\eta$-equal to the identity at type $\tS$ ($M\circ N$ is short for $\lambda x. M(Nx)$, where $x$ is fresh).

The importance of type isomorphism has been highlighted by Di Cosmo~\cite{Dicosmo93},
%by Fiore,  Di Cosmo and Balat~\cite{apal:06},
who noted that the equivalence relation on types induced by the notion of isomorphism allows one to abstract from inessential details in the representation of data in programming languages.
To distinguish isomorphic types can entail useless drawbacks; for instance, if a library contains a function of type $\tA\wedge\tB\to\tC$, a request on a function of type $\tB\wedge\tA\to\tC$ will not have success. Note that types as keys are actually used in Hoogle~\cite{mitj}, an Haskell API search engine which allows one to search many standard Haskell libraries by either function name, or by approximate type signature. Neil Mitchell~\cite{mit} remarks
%stressed
that in this application  a suitable notion of  ``closeness" of types is needed, and isomorphism represents one of the possible meanings of type closeness.
Recently, D\'iaz-Caro and Dowek~\cite{DD14}  pointed out that in typed lambda-calculus, in programming languages, and in proof theory, isomorphic types are often identified. For example, the definitionally equivalent types are identified in Martin-L\"{o}f's type theory and in the Calculus of Constructions.
% To distinguish isomorphic types can entail useless drawbacks; for instance, if a library contains a function of type %$\tA\wedge\tB\to\tC$, a request on a function of type $\tB\wedge\tA\to\tC$ will not have success.
 For this reason~\cite{DD14} proposes a type system in which $\lambda$-terms getting a type have also all types isomorphic to it.

In the simply typed $\lambda$-calculus, the isomorphism has been characterised by Bruce and Longo~\cite{BruceLongo85}
using
the {\bfseries swap} equation:\quad
$\sigma \rightarrow \tau \rightarrow \rho  ~\iso~  \tau \rightarrow \sigma\rightarrow
    \rho$.
    In
richer $\lambda$-calculi, obtained from the simply typed one by adding other
type constructors (like product types~\cite{Solovev83,BruceDicosmoLongo92,soloviev93complete}) or by allowing higher-order types (System F~\cite{BruceLongo85,Dicosmo93}), the set of equations characterising isomorphic types is obtained in an incremental way.
A survey of these results is given by Di Cosmo in
\cite{MSCSSurvey05}.

As pointed out in~\cite{DDGT10},~\cite{CDMZ13a}, this incremental approach
does not work when intersection and union types are considered. The isomorphism
is no longer a  congruence and that prevents to give it a finitary axiomatisation.
The lack of congruence can be shown considering,
for instance,
the types
\myformula{$\sigma = \varphi_1\to\varphi_2\to\varphi_3$ ~~~and~~~ $\tau = \varphi_2\to\varphi_1\to\varphi_3$.}  They are isomorphic (by argument swapping), while, in general,  both their intersection and their union with another type, for instance $\rho = \varphi_4\to\varphi_5\to\varphi_6$, are not. The reason is that $\sigma$ and $\tau$ are isomorphic by argument swapping, while $\rho$ is isomorphic to itself by identity.

 The standard models of intersection and union types map types to subsets of any domain that is a model of the untyped $\lambda$-calculus, with the conditions that the arrow is interpreted as the function space constructor and the intersection and union operators as the corresponding set-theoretic operators~\cite{barba}.
Oddly enough,
type equality in the standard interpretation of intersection types in $\lambda$-models
%models of intersection types
does not imply type isomorphism~\cite{DDGT10} and it is so also for union types.
 %equality does not imply isomorphism.
 This fact is due to the interference between atomic types, without functional behaviour, and functional types. For example, $\tA\vee\tB\to\tC$ and $\tB\vee\tA\to\tC$ are equal in all standard models, and isomorphic. In fact, the term $\lambda xy.xy$ has both the types $(\tA\vee\tB\to\tC)\to\tB\vee\tA\to\tC$ and $(\tB\vee\tA\to\tC)\to\tA\vee\tB\to\tC$; note that these isomorphic types are both functional, and this fact is exploited in the deductions. On the contrary, the considered types are no longer
isomorphic when put in intersection or in union
with an atomic type $\varphi$, although their interpretations remain equal;
indeed, there is no $\lambda$-term mapping $(\tB\vee\tA\to\tC)\vee\varphi$ to $(\tA\vee\tB\to\tC)\vee\varphi$, or vice-versa, since when a functional type is put in union (or in intersection) with an atomic type, the possibility of exploiting its functional shape is lost.
%A similar argument applies to the case of intersection.
Despite these problems,  a characterisation of type isomorphism
is given in~\cite{CDMZ13a}, by defining an (effective) notion of type similarity which turns out to correspond to isomorphism.

The existence of non-isomorphic, but semantically equal, types
reveals a weakness of the type assignment system considered in~\cite{CDMZ13a}, due essentially to the fact that atomic types do not have a  functional behaviour.
This assumption is indeed questionable in the pure $\lambda$-calculus, where everything is a function.
A type system for intersection types in which type isomorphism contains type equality has been proposed in~\cite{CDMZ14} by assuming each atomic type equivalent to a functional one, in such a way that they can be freely interchanged in any deduction.

In the present paper, we extend
the result of~\cite{CDMZ14} considering also union types.
This extension is not trivial owing to the rather odd nature of union types. For instance, as remarked in~\cite{barba}, in systems with intersection and union types, subject reduction does not hold in general.

Following~\cite{CDMZ14}, each atomic type is interpreted as the set of constant functions returning values belonging to the set itself. This is realised
by assuming that any atomic type $\varphi$ is equivalent to $\omega \to \varphi$ (where $\omega$ is the type interpreted as the whole domain).
This choice is motivated by the definition of initial projections in Scott's $D_\infty$ $\lambda$-model~\cite{S72} and from the relations between inverse limit models and filter models~\cite{CDHL84}. In $D_\infty$  each element of the initial domain $D_0$ is projected in a constant function which returns itself when applied to any argument.
As proved in~\cite{CDHL84},  $D_\infty$ is isomorphic to a filter $\lambda$-model built from a set of atomic types %elements
which correspond to compact elements of the initial domain $D_0$. This model equates $\varphi$ to $\omega \to \varphi$ by construction.
In an applicative setting  it is sensible to assume a semantics in which a constant value (say, an integer), when used as a function, %deletes its argument and
returns itself, independently of its argument, validating the present functional interpretation of atomic types.

{\bf Summary} Section~\ref{tas} presents the type assignment system with its properties, notably Subject Reduction and Subject Expansion. Section~\ref{iso} introduces the notion of
isomorphism.
Section~\ref{nos} defines a set of isomorphism preserving normalisation rules for types.
Section~\ref{sse} gives a notion of similarity between types in normal form which assures isomorphism. Section~\ref{conc} draws some possible further work.

\section{Type Assignment System}\labelx{tas}
Let $\Co$ be a denumerable set of atomic types ranged over by $\varphi,\psi$,  and $\tu$ an atom not in $\Co$. The  syntax of  types is given by:
\begin{center}
$\begin{array}{lll}
\tA&::=&\varphi~~\mid~\tu~~\mid~\sigma\to\sigma~\mid~\sigma\wedge\sigma\mid~\sigma\vee\sigma
\end{array}$
\end{center}
As usual, parentheses are omitted according to the precedence rule ``$\wedge$ and $\vee$ over $\rightarrow$'' and ``$\to$'' associates to the right.  Arbitrary types are ranged over by $\tA , \tB  , \tC, \tD$.

The following equivalence asserts the functional character of atomic types, by equating them to arrow types.  It also states that $\tu$ is the top type, viewing intersection and union set-theoretically.

\begin{definition}[Semantic type equivalence]\labelx{eq}
The {\em semantic equivalence relation $\esim$ on types} is defined as the minimal congruence such that :
\myformula{$\varphi  \esim \omega \to \varphi$ \qquad
$\omega \esim \omega \to \omega$ \qquad
 $\tA  \esim \tA \wedge \omega$    \qquad    $\tA  \esim \omega \wedge \tA$ \qquad
 $\omega  \esim \tA \vee \omega$    \qquad    $\omega  \esim \omega \vee \tA$.
}\end{definition}
The congruence allows one to state that $\tA \esim \tA'$ and $\tB \esim \tB'$ imply $\tA \wedge \tB  \esim \tA' \wedge \tB'$ and $\tA \vee \tB  \esim \tA' \vee \tB'$. Moreover  $\tA \to \tB  \esim \tA'  \to \tB'$ if and only if $\tA \esim \tA'$ and $\tB \esim \tB'$.  Note that no other equivalence is assumed between types, for instance $\tA \wedge \tB$ is different from $\tB \wedge\tA$ and $\tA \vee \tB$ is different from $\tB \vee\tA$.

\smallskip

In the type assignment system considered in this paper types are assigned only to
 linear $\lambda$-terms. A $\lambda$-term is {\em linear} if each free or bound variable occurs exactly once in it. This is justified by the observation that type isomorphisms are realised by particular linear $\lambda$-terms, called ``finite hereditary permutators" (see Definitions~\ref{fhp} and~\ref{ti}).
 %which are linear  $\lambda$-terms
 This is not restrictive since it is easy to prove that the full system, without linearity restriction~\cite{barba}, is conservative over the present one. Therefore the types that can be derived for the finite hereditary permutators  are the same in the two systems, so the present study of type isomorphism holds for
the full system too.

\begin{figure}
\centerline{
$
\begin{array}{ll@{~~~~~~}ll}
(Ax)& \quad\quad   x\dup\tA \vdash x\dup\tA&
(\esim )& \quad \db  \frac{\B
\vdash M\dup\tA\quad \tA \esim \tB}{\B \vdash M\dup \tB}\\
\\
(\to I) &  \db \frac{\B,x\dup\tA \vdash M\dup\tB} {\B \vdash \lambda
x.M\dup\tA \to \tB} & (\to E) & \db \frac{\B_{1} \vdash M\dup\tA \to
\tB \quad
\B_{2} \vdash N\dup\tA}{\B_{1}, \B_{2} \vdash MN\dup\tB} \\
\\
(\wedge I) &  \db \frac{\B \vdash M\dup\tA ~~ \B \vdash M\dup\tB}
{\B \vdash M\dup\tA \wedge \tB} &
(\wedge E) & \db \frac{\B \vdash M\dup\tA \wedge \tB}{\B \vdash M\dup\tA}
~~~~~~\db \frac{\B \vdash M\dup\tA \wedge \tB}{\B \vdash M\dup\tB}\\
\\
\multicolumn{4}{c}{(\vee I) \quad  \db \frac{\B \vdash M\dup\tA}
{\B \vdash M\dup\tA\vee\tB}~~~~~~  \db \frac{\B \vdash M\dup\tA}
{\B \vdash M\dup\tB\vee\tA}}\\\\
\multicolumn{4}{c}{(\vee E) \quad \db \frac{\B_{1}, x\dup\tA\wedge\tD
\vdash M\dup\tC\quad \B_{1}, x\dup\tB\wedge\tD \vdash M\dup\tC\quad
\B_{2} \vdash N\dup(\tA\vee\tB)\wedge\tD}{\B_{1},\B_{2} \vdash
M[N/x]\dup\tC}}\end{array}
$
}
\caption{Typing rules.}\labelx{tr}
\end{figure}
Figure~\ref{tr} gives the typing rules.
 As usual, {\em environments} associate
variables to types and  contain at most one type for each variable.
The environments are relevant, i.e. they contain only the used premises. The domain of the environment $\Gamma$ is denoted by $dom(\Gamma)$. When writing $\Gamma_1,  \Gamma_2$ one convenes
that $dom(\Gamma_1)\cap  dom(\Gamma_2)=\emptyset$. It is easy to verify that $\B \vdash M\dup \tA$ implies $dom(\Gamma)=FV (M)$, where $FV (M)$ denotes the set of free variables of $M$. %\\

\smallskip

The following rules are admissible.

\smallskip

\centerline{$\begin{array}{ccc}
( L) \quad \db \frac{x\dup\tA \vdash x\dup\tB\quad \B,x\dup\tB \vdash M\dup\tC}{\B,x\dup\tA \vdash M\dup\tC}
&\qquad &(\tu) \quad \db \frac{dom(\Gamma)=FV (M)}{\B \vdash  M\dup\tu}\\[10pt]
(C) ~~ \db \frac{\B_{1},x\dup\tA \vdash M\dup\tB\quad \B_{2} \vdash N\dup\tA}{\B_{1}, \B_{2} \vdash M[N/x]\dup\tB}
&\qquad&(\vee I') ~~ \db \frac{\B, x\dup\tA \vdash M\dup\tC\quad \B, x\dup\tB \vdash M\dup\tC}{\B, x\dup\tA\vee\tB \vdash M\dup\tC}
\\[10pt]
\multicolumn{3}{c}{
(\vee E') ~~ \db \frac{\B_{1}, x\dup\tA \vdash M\dup\tC\quad \B_{1}, x\dup\tB \vdash M\dup\tC\quad \B_{2} \vdash N\dup\tA\vee\tB}{\B_{1},\B_{2} \vdash M[N/x]\dup\tC}}
\end{array}$}
\smallskip
\smallskip
Remark that, considering only linear terms, cut elimination (rule $(C)$) corresponds to standard $\beta$-reduction, while for arbitrary terms  parallel reductions are needed; for details see~\cite{barba}.
Therefore one can state%conclude
:
\smallskip
\begin{theorem}[SR] \labelx{srtheorem}
If  $\B \vdash M\dup\tA$ and $M\longrightarrow_{\beta}^*N$, then $\B \vdash
N\dup\tA$.
\end{theorem}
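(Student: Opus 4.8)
The plan is to prove subject reduction for a single $\beta$-step and then lift it to $\longrightarrow_\beta^*$ by a trivial induction on the number of steps, the zero-step case being immediate. So assume $\B\vdash M\dup\tA$ and $M\longrightarrow_\beta N$ in one step, and argue by induction on the derivation of $\B\vdash M\dup\tA$, distinguishing cases on the last applied rule. The rules $(\esim)$, $(\wedge I)$, $(\wedge E)$ and $(\vee I)$ are transparent to reduction: the contracted redex lies in a subterm typed by the premise(s), so the induction hypothesis together with a re-application of the same rule closes these cases. Likewise, if the last rule is $(\to I)$, or $(\to E)$ with the redex lying strictly inside an immediate subterm, the induction hypothesis on the relevant subderivation and a re-application of $(\to I)$ respectively $(\to E)$ suffice.

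The genuinely interesting case of $(\to E)$ is when $M=(\lambda x.P)Q$ is itself the redex, so that $N=P[Q/x]$. Here the premises give $\B_1\vdash\lambda x.P\dup\tB\to\tA$ and $\B_2\vdash Q\dup\tB$ for some $\tB$, with $\B=\B_1,\B_2$. To extract a typing of the body $P$ I would first establish a generation (inversion) lemma for abstractions stating that $\B_1\vdash\lambda x.P\dup\tB\to\tA$ entails $\B_1,x\dup\tB\vdash P\dup\tA$. This is where the non-syntax-directed rules bite, since the arrow $\tB\to\tA$ may have been produced through $(\wedge E)$, $(\vee I)$, or, most delicately, through $(\esim)$, which via $\varphi\esim\omega\to\varphi$ and the arrow-compatibility of $\esim$ can reshape a type into an arrow. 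Once the inversion lemma is in place, the conclusion $\B_1,\B_2\vdash P[Q/x]\dup\tA$ is exactly an instance of the admissible cut rule $(C)$; this is precisely why, as noted in the text, cut elimination coincides with head $\beta$-reduction on linear terms.

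The hardest case, and the one where linearity is indispensable, is the last rule being $(\vee E)$: then $M=M_0[N_0/y]$ is already obtained by a substitution, and the contracted redex may sit inside $M_0$, inside the single copy of $N_0$, or be created at the interface where $N_0$ was plugged in. Because the term is linear, $y$ occurs exactly once in $M_0$, hence $N_0$ occurs exactly once in $M$, so these subcases neither interact nor duplicate work; I would push the reduction through the three premises using the induction hypothesis (and, for the interface subcase, the inversion lemma and $(C)$ again), then re-apply $(\vee E)$. This controlled behaviour is exactly what fails for non-linear terms, where a redex duplicated or erased by substitution breaks subject reduction for union types and forces the passage to parallel reduction. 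I therefore expect the main obstacle to be organising the $(\vee E)$ case so that the single occurrence guaranteed by linearity is exploited uniformly, together with verifying that the abstraction inversion lemma survives the $(\esim)$ rule.
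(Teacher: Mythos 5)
Your overall strategy is the standard direct one (single-step reduction, induction on the typing derivation, a generation lemma for abstractions, and the admissible cut rule $(C)$ for the contracted redex), and it is a reasonable plan. It is worth noting, though, that the paper does not carry out this induction at all: it first establishes the admissibility of $(L)$, $(C)$, $(\vee I')$, $(\vee E')$ and then obtains Subject Reduction in one line from the observation that, on linear terms, eliminating an instance of $(C)$ \emph{is} a $\beta$-step, deferring all details to the cited work of Barbanera, Dezani and de'Liguoro. So the paper's route is ``admissibility of cut $+$ Curry--Howard reading of cut elimination,'' while yours is the explicit term-level induction; the two are morally equivalent, but yours has to make visible exactly the case analysis the paper hides behind the citation.

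Two places in your sketch underestimate that hidden work. First, your inversion lemma for $\B\vdash\lambda x.P\dup\tB\to\tA$ lists $(\wedge E)$, $(\vee I)$ and $(\esim)$ as the problematic last rules, but omits $(\vee E)$: since the subject of the conclusion of $(\vee E)$ is a substitution instance $M[N/y]$, an abstraction $\lambda x.P$ can perfectly well be typed by a derivation ending in $(\vee E)$ (take $M=y$, or $M=\lambda x.M'$ with $P=M'[N/y]$), so the inversion lemma must be proved simultaneously with, or by induction compatible with, that rule. Second, in the $(\vee E)$ case of the main induction, the ``interface'' subcase --- where $M_0$ contains the single occurrence of $y$ in a subterm $yQ$ and $N_0$ is an abstraction, so the contracted redex exists in $M_0[N_0/y]$ but in neither premise --- is not dispatched by ``inversion and $(C)$ again'' as stated: the reduct is no longer of the form $M_0'[N_0/y]$, so you cannot simply re-apply $(\vee E)$, and you need either a commutation of $(\vee E)$ with the surrounding context or the derived rule $(\vee E')$ together with a typing of $N_0$ at the two disjuncts. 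This is precisely the point where the full system needs parallel reduction and where the linear case still requires a genuine argument; naming linearity is right, but the sketch does not yet show how the single occurrence is ``exploited uniformly'' there.
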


\smallskip
The Subject Reduction Theorem allows one to show some properties useful in the following proofs.

\begin{corollary} \labelx{cor}
\begin{enumerate}
\item\labelx{cor1} If $~\B\vdash \lambda x.M\dup\tA\to\tC$ and $~\B\vdash \lambda x.M\dup\tA\to\tD$, then $~\B\vdash \lambda x.M\dup\tA\to \tC\wedge\tD$.
\item\labelx{cor2} If $~\B\vdash \lambda x.M\dup\tA\to\tB$ and $~\B\vdash \lambda x.M\dup\tC\to\tB$, then $~\B\vdash \lambda x.M\dup\tA\vee\tC\to\tB$.
\item\labelx{cor5} If $~\B\vdash \lambda x.M\dup\tA\to\tC$ and $~\B\vdash \lambda x.M\dup\tB\to\tD$, then $~\B\vdash \lambda x.M\dup\tA\wedge\tB\to \tC\wedge\tD$ and $~\B\vdash \lambda x.M\dup\tA\vee\tB\to\tC\vee\tD$.
\end{enumerate}
\end{corollary}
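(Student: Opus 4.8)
The three items share a common shape: in each case the two hypotheses type the \emph{same} abstraction $\lambda x.M$ under the same $\B$ with two arrow types, and we must produce a single arrow type. A derivation of $\B\vdash\lambda x.M\dup\tA\to\tB$ need not end with $(\to I)$ — it may arrive through $(\esim)$, through $(\wedge E)$ on an intersection, or through a $(\vee E)$ whose subject is the abstraction — so we cannot simply read off a premise $\B,x\dup\tA\vdash M\dup\tB$ by inspecting the last rule. The plan is instead to recover such a premise \emph{operationally}, and this is precisely where Theorem~\ref{srtheorem} is used.

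The key auxiliary step, applied in every item, is the following arrow inversion: for arbitrary types $\tA,\tB$, a derivation of $\B\vdash\lambda x.M\dup\tA\to\tB$ yields $\B,y\dup\tA\vdash M[y/x]\dup\tB$ for any fresh $y$. Indeed, from the axiom $y\dup\tA\vdash y\dup\tA$ and the hypothesis, rule $(\to E)$ gives $\B,y\dup\tA\vdash(\lambda x.M)\,y\dup\tB$ (the side condition $dom(\B)\cap\{y\}=\emptyset$ holds by freshness); since $(\lambda x.M)\,y\longrightarrow_{\beta}M[y/x]$, Subject Reduction gives $\B,y\dup\tA\vdash M[y/x]\dup\tB$. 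Linearity of $\lambda x.M$ guarantees that $x$ occurs exactly once in $M$, so $M[y/x]$ is again linear and $\lambda y.M[y/x]=_\alpha\lambda x.M$; hence re-abstracting with $(\to I)$ at the very end returns a typing of $\lambda x.M$.

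With this in hand each case is a short recombination of the body typings followed by one use of $(\to I)$. For~(\ref{cor1}), arrow inversion yields $\B,y\dup\tA\vdash M[y/x]\dup\tC$ and $\B,y\dup\tA\vdash M[y/x]\dup\tD$; rule $(\wedge I)$ gives $\tC\wedge\tD$ and $(\to I)$ concludes. For~(\ref{cor2}), inversion yields $\B,y\dup\tA\vdash M[y/x]\dup\tB$ and $\B,y\dup\tC\vdash M[y/x]\dup\tB$, and the admissible rule $(\vee I')$ merges the two environments into $\B,y\dup\tA\vee\tC\vdash M[y/x]\dup\tB$, whence $(\to I)$ concludes. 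For~(\ref{cor5}), to obtain the intersection arrow I turn $\B,y\dup\tA\vdash M[y/x]\dup\tC$ into $\B,y\dup\tA\wedge\tB\vdash M[y/x]\dup\tC$ by feeding $y\dup\tA\wedge\tB\vdash y\dup\tA$ (derivable by $(Ax)$ and $(\wedge E)$) into the admissible rule $(L)$, do the symmetric thing to push $\B,y\dup\tB\vdash M[y/x]\dup\tD$ to $\B,y\dup\tA\wedge\tB\vdash M[y/x]\dup\tD$, combine with $(\wedge I)$ and abstract; for the union arrow I first weaken both body types to the common codomain $\tC\vee\tD$ using $(\vee I)$, then merge the domains $\tA$ and $\tB$ with $(\vee I')$ exactly as in~(\ref{cor2}), and abstract.

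The only genuinely delicate point is the arrow-inversion step: one must be sure that instantiating the abstraction at a fresh variable and then contracting the redex exposes a \emph{valid} typing of the body, no matter how the original arrow judgement was derived. This is exactly what Subject Reduction buys, and it is why the statement is phrased as a corollary of Theorem~\ref{srtheorem} rather than proved by a direct induction on derivations (which would founder on the $(\esim)$, $(\wedge E)$ and $(\vee E)$ cases). Everything after that is routine bookkeeping with the primitive rules $(\wedge I)$, $(\wedge E)$, $(\vee I)$, $(\to I)$ and the already-established admissible rules $(L)$ and $(\vee I')$.
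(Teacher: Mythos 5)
Your proof is correct and follows exactly the route the paper intends: the corollary is stated as a consequence of Theorem~\ref{srtheorem} precisely because the body typing $\B,y\dup\tA\vdash M[y/x]\dup\tB$ is recovered by applying $\lambda x.M$ to a fresh variable via $(\to E)$ and contracting the redex with Subject Reduction, after which the recombination via $(\wedge I)$, $(\vee I')$, $(L)$, $(\vee I)$ and a final $(\to I)$ is routine. No gaps; your handling of linearity, freshness of $y$, and the environment disjointness convention is all in order.
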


\smallskip

In the considered system types are not preserved by $\eta$-reduction, as proved by the simple example:
%$x\dup\varphi\vdash\lambda y.xy\dup\psi\to\varphi\text{, that is}
\myformula{$\vdash \lambda xy.xy\dup \varphi\to\psi\to\varphi\text{, but} \not\vdash\lambda x.x \dup \varphi\to\psi\to\varphi$}

\smallskip

On the contrary, subject expansion holds for both $\beta$ and $\eta$-expansions.
\begin{theorem}[Subject Expansion] \labelx{setheorem}
If  $M$ is a linear $\lambda$-term and $M\longrightarrow_{\beta\eta}^*N$ and $\B \vdash
N\dup\tA$, then $\B \vdash M\dup\tA$.
\end{theorem}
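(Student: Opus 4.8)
The plan is to reduce the statement to single reduction steps and to treat $\beta$- and $\eta$-expansion separately. By induction on the length of $M\longrightarrow_{\beta\eta}^*N$ (using that $\beta$- and $\eta$-reduction preserve linearity, so every intermediate term, and $N$ itself, is linear) it suffices to show: if $M\longrightarrow_{\beta}N$ or $M\longrightarrow_{\eta}N$ in one step and $\B\vdash N\dup\tA$, then $\B\vdash M\dup\tA$. A single step contracts one redex occurring at some position in $M$, and I would argue by induction on the derivation of $\B\vdash N\dup\tA$, so that the position of the redex is tracked through the typing rules and the real work concentrates where the (sub)derivation types the contractum. Hence the two facts to establish are a \emph{root} $\eta$-expansion step and a \emph{root} $\beta$-expansion step, the latter being an inverse substitution lemma.

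For $\eta$ the cleanest route avoids any case analysis on the goal type. First I would prove the auxiliary fact that $z\dup\tA\vdash\lambda x.zx\dup\tA$ holds for \emph{every} type $\tA$, by induction on $\tA$: for $\tA=\tB\to\tC$ this is immediate from $(Ax)$, $(\to E)$ and $(\to I)$; for $\tA$ atomic or $\tA=\tu$ one first rewrites $\tA$ as an arrow by $\esim$ (using $\varphi\esim\tu\to\varphi$ and $\tu\esim\tu\to\tu$) and applies the arrow case; for $\tA=\tB\wedge\tC$ one uses $(\wedge E)$, $(L)$, the induction hypothesis and $(\wedge I)$; and for $\tA=\tB\vee\tC$ one uses the induction hypothesis, $(\vee I)$ and the admissible rule $(\vee I')$. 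The general root $\eta$-step then follows in one stroke: given $\B\vdash P\dup\tA$ with $x\notin FV(P)$, take $z$ fresh and cut the auxiliary derivation $z\dup\tA\vdash\lambda x.zx\dup\tA$ against $\B\vdash P\dup\tA$ by the admissible rule $(C)$; since $(\lambda x.zx)[P/z]=\lambda x.Px$, this yields $\B\vdash\lambda x.Px\dup\tA$.

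For $\beta$ the key is the inverse substitution lemma: if $x$ occurs exactly once in $P$, $x\notin FV(Q)$ and $\B\vdash P[Q/x]\dup\tA$, then there are a type $\tB$ and the (unique, by linearity) splitting $\B=\B_1,\B_2$ with $dom(\B_2)=FV(Q)$ such that $\B_1,x\dup\tB\vdash P\dup\tA$ and $\B_2\vdash Q\dup\tB$; from these, $(\to I)$ and $(\to E)$ rebuild $\B\vdash(\lambda x.P)Q\dup\tA$. I would prove this by induction on the derivation of $\B\vdash P[Q/x]\dup\tA$. The base case is uniform: if $P=x$ then $P[Q/x]=Q$ and one takes $\tB=\tA$, $\B_1=\emptyset$. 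When $P\neq x$ the occurrence of $Q$ is a proper subterm, and the cases $(\to I)$, $(\to E)$, $(\wedge E)$, $(\vee I)$, $(\esim)$ only push the induction hypothesis through one premise and reapply the rule (for $(\to E)$ linearity places the single occurrence of $x$ in exactly one of the two subterms). The case $(\wedge I)$ is the first mildly delicate one: the two premises return two types $\tB_1,\tB_2$ for the \emph{same} $Q$, which are merged into $\tB=\tB_1\wedge\tB_2$ via $(\wedge I)$ on the $Q$-side and via $(L)$, using $x\dup\tB_1\wedge\tB_2\vdash x\dup\tB_i$, on the $P$-side.

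The main obstacle is the case $(\vee E)$, whose conclusion $\B\vdash R[N/z]\dup\tA$ is itself a substitution, so that the identity $P[Q/x]=R[N/z]$ must be reconciled. Since $x$ occurs once, its occurrence lies either inside the argument $N$ or inside $R$ proper. In the first case $N=N'[Q/x]$ and $P=R[N'/z]$, so I would apply the induction hypothesis to the minor premise $\B_2\vdash N\dup(\tC\vee\tD)\wedge\tE$ and reassemble with $(\vee E)$. In the second case $R=S[Q/x]$ and $P=S[N/z]$, so I would apply the induction hypothesis to \emph{both} branch premises, obtain two types $\tB_1,\tB_2$ for $Q$, merge them into $\tB_1\wedge\tB_2$ exactly as in the $(\wedge I)$ case, and reapply $(\vee E)$ with the unchanged minor premise. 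Throughout, linearity is what makes the argument go through: because $x$ and $z$ each occur exactly once, the two substitutions commute, the environment splits uniquely, and the only new types introduced are the finite intersections forced by $(\wedge I)$ and by the $R$-subcase of $(\vee E)$, so no non-well-founded merging can arise. I expect this $(\vee E)$ analysis, together with the bookkeeping needed to carry it out inside the derivation of the reduct when the redex is not at the root, to be the most laborious part.
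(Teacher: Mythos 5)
Your $\eta$-part is essentially the paper's proof in different packaging: the auxiliary judgement $z\dup\tA\vdash\lambda x.zx\dup\tA$, proved by induction on $\tA$ with $(\vee I')$ in the union case, is exactly what the paper's induction on types establishes (its use of $(\vee E')$ has your cut built in), and the reduction to single root steps is routine. The genuine divergence is in the $\beta$-part: you aim at a full inverse substitution lemma extracting a \emph{single} type $\tB$ for the argument $Q$, whereas the paper proves the weaker implication ``$\B\vdash M[N/x]\dup\tA$ implies $\B\vdash(\lambda x.M)N\dup\tA$'' directly by induction on the derivation, and only ever needs a type for $N$ in the one $(\vee E)$ case where the minor premise hands it over (namely $(\tC\vee\tB)\wedge\tD$, transferred to $\lambda x.M$ via $(\vee I')$ and $(L)$).

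There is a gap in your $(\vee E)$ case. With $P[Q/x]=R[N/z]$ and $x,z$ linear, the occurrences of $Q$ and of $N$ in the common term admit \emph{three} relative positions, not two: $Q$ inside $N$; $Q$ and $N$ disjoint (your ``$R=S[Q/x]$, $P=S[N/z]$'' case); and the occurrence of $Q$ \emph{properly containing} the occurrence of $N$. The third case is realisable --- e.g.\ $P=yx$, $Q=(\lambda w.w)a$, with a derivation ending in $(\vee E)$ that presents $y((\lambda w.w)a)$ as $(y((\lambda w.w)z))[a/z]$ --- and it satisfies neither of your descriptions: there $Q$ is not a subterm of $R$ and $z\notin FV(P)$. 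Worse, your merging strategy fails on it: writing $Q=T[N/z]$ and $R=P[T/x]$, the two branch premises give $\Gamma^P,x\dup\tB_1\vdash P\dup\tA$ with $\Gamma^T,z\dup\tC\wedge\tE\vdash T\dup\tB_1$, and $\Gamma^P,x\dup\tB_2\vdash P\dup\tA$ with $\Gamma^T,z\dup\tD\wedge\tE\vdash T\dup\tB_2$; the intersection $\tB_1\wedge\tB_2$ is of no use because re-applying $(\vee E)$ to type $Q$ needs both branches to type $T$ at the \emph{same} type. The case is salvageable, but with the dual merge: take $\tB=\tB_1\vee\tB_2$, bring both branches to $\tB$ by $(\vee I)$, derive $Q\dup\tB$ by $(\vee E)$ with the unchanged minor premise, and get $\Gamma^P,x\dup\tB\vdash P\dup\tA$ by $(\vee I')$ --- so intersections are not ``the only new types introduced''. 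Alternatively, and this is where the paper's formulation pays off, drop the single-type lemma and prove $\B\vdash(\lambda x.P)Q\dup\tA$ directly: in the containment case the induction hypothesis types $(\lambda x.P)T$ with $\tA$ in both branches, and one application of $(\vee E)$ with the same minor premise yields $(\lambda x.P)(T[N/z])=(\lambda x.P)Q$, since $z\notin FV(P)$.
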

\begin{proof}
For $\beta$-expansion it is enough to show: $\B \vdash M[N/x]\dup\tA$ implies  $\B \vdash (\lambda x.M)N\dup\tA$. The proof is by induction on the derivation of $\B \vdash M[N/x]\dup\tA$.
The only interesting case is when the last applied rule is
\myformulaE{$(\vee E) \quad \db \frac{\B_{1}, x\dup\tC\wedge\tD \vdash M\dup\sigma\quad \B_{1}, x\dup\tB\wedge\tD \vdash M\dup\sigma\quad \B_{2} \vdash N\dup(\tC\vee\tB)\wedge\tD}{\B_{1},\B_{2} \vdash M[N/x]\dup\sigma}$}
It is easy to derive $x\dup(\tC\vee\tB)\wedge\tD \vdash x\dup(\tC\wedge\tD)\vee(\tB\wedge\tD)$. Rule $(\vee I')$ applied to the first two premises gives $\B_{1}, x\dup(\tC\wedge\tD)\vee(\tB\wedge\tD)\vdash M\dup\sigma$. So rule $(L)$ derives $\B_{1}, x\dup(\tC\vee\tB)\wedge\tD\vdash M\dup\sigma$, and rule $(\to I)$ derives $\B_{1}\vdash \lambda x. M\dup (\tC\vee\tB)\wedge\tD\to\sigma$. Rule $(\to E)$ gives the conclusion.\\
For $\eta$-expansion the proof is by induction on types. The only interesting case is when $\tA = \tB \vee \tC$. Using rule $(\vee E')$ and applying the induction hypothesis to the first two assumptions one gets:
\myformulaE{     %\[
$\db \frac{ y\dup\tB \vdash \lambda x . y x \dup\tB\quad  y\dup\tC \vdash \lambda x. y x \dup\tC\quad \B \vdash M\dup\tB\vee\tC}{\B \vdash \lambda x. M x\dup\tB \vee \tC}$
   }   \vspace{-10mm}
\end{proof}

\section{Isomorphism} \label{iso}
The study of the type isomorphism in $\lambda$-calculus is based on the characterisation of $\lambda$-term invertibility. A $\lambda$-term $P $ is \textit{invertible} if there exists a
$\lambda$-term $P^{-1}$ such that $P \circ  P^{-1} =_{\beta\eta} P^{-1}\circ P =_{\beta\eta}  = \lambda x.x$. The paper~\cite{Dezani} completely characterises the invertible
$\lambda$-terms in the type free $\lambda \beta \eta$-calculus: the invertible terms are all and only the
\textit{finite hereditary permutators}.
\begin{definition}
% \textit{}
 [Finite Hereditary Permutator] \labelx{fhp}A \emph{finite
hereditary permutator} (\fhp\ for short) is a $\lambda$-term of the form (modulo $\beta$-conversion)
\myformula{$\lambda x y_{1}\dots y_{n}.x(P_1 y_{\pi(1)})\dots (P_n y_{\pi (n)}) \; \; \;(n \geq  0)$}
where  $\pi$ is a permutation of $1,\dots,n$, and $P_1,\ldots,P_n$ are \fhp s.
\end{definition}
\noindent
Note that the identity  is trivially an \fhp\ (take $n=0$). Another example of an \fhp\ is
\myformula{$\lambda x y_1 y_2. x\,y_2\,y_1 \mtob \lambda x y_1 y_2. x\,((\lambda z. z)\,y_2)\,((\lambda z. z)\,y_1),$}
which proves the swap equation. %As
It is easy to show that \fhp s are closed under composition.

\begin{theorem}
 A $\lambda$-term is invertible iff it
is a finite hereditary permutator.
\end{theorem}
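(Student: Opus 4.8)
The plan is to prove the two implications separately, the forward direction (every \fhp\ is invertible) being a routine induction and the converse (every invertible term is an \fhp) carrying the real content.

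For the direction ``\fhp\ implies invertible'' I would argue by induction on the nesting depth of the \fhp. Given $P = \lambda x y_1 \dots y_n.\, x(P_1 y_{\pi(1)})\dots(P_n y_{\pi(n)})$ with each $P_i$ an \fhp, the induction hypothesis supplies inverses $P_i^{-1}$, themselves \fhp s. I would then exhibit the explicit candidate $P^{-1} = \lambda x y_1 \dots y_n.\, x\,(P_{\pi^{-1}(1)}^{-1} y_{\pi^{-1}(1)})\dots(P_{\pi^{-1}(n)}^{-1} y_{\pi^{-1}(n)})$, which is again an \fhp, and verify by a direct $\beta\eta$-computation that $P\circ P^{-1} =_{\beta\eta} P^{-1}\circ P =_{\beta\eta} \lambda x.x$: after $\beta$-reducing, the outer permutation $\pi$ is undone by $\pi^{-1}$ and each inner pair $P_i\circ P_i^{-1}$ collapses to the identity by the induction hypothesis, an $\eta$-step removing the trailing abstractions. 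This step is mechanical and uses only closure of \fhp s under composition.

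For the converse I would proceed in three stages. First, invertibility forces solvability: if $P$ had no head normal form then $P\,M$ would be unsolvable for every $M$, contradicting $P(P^{-1}x) =_{\beta\eta} x$; the same applies to $P^{-1}$. Hence $P$ has a head normal form $\lambda x_0 \dots x_m.\, \zeta\, N_1\dots N_k$. Second, I would bootstrap from head normal forms to a full $\beta\eta$-normal form: the shape of $P(P^{-1}x)=_{\beta\eta}x$ forces the relevant subterms to be solvable as well, so that iterating the argument shows the B\"ohm tree of $P$ is finite and $P$ is normalisable. Third, and this is the core, I would analyse the normal form by induction on its size to read off the \fhp\ shape.

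The heart of the argument, and the step I expect to be the main obstacle, is this structural analysis. Writing $P$ in $\eta$-long normal form as $\lambda x_0 x_1 \dots x_n.\, x_j\,N_1\dots N_r$, I must show that the head variable $x_j$ is the first abstracted variable $x_0$, that the number of arguments matches the remaining abstractions ($r=n$), that each of $x_1,\dots,x_n$ occurs exactly once and inside exactly one argument $N_i$, and that each $N_i$ is, up to $\eta$, an \fhp\ applied to a single such variable, the assignment $i\mapsto$ (the variable occurring in $N_i$) being a permutation $\pi$. The key is a linearity/counting argument: reducing the composite $P\circ P^{-1}$, any bound variable that were duplicated or dropped would make information irrecoverable, contradicting that the composite is $\lambda x.x$, which pins the head variable and forces each $x_i$ to appear exactly once. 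The invertibility of the whole then transfers to invertibility of each $N_i$ viewed as a function of its variable, so that the induction hypothesis identifies $N_i$ with an \fhp. Turning the informal ``no information can be lost'' principle into a rigorous counting argument, and controlling its interaction with $\eta$-expansion, is the delicate part; the rest is bookkeeping.
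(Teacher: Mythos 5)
The paper does not prove this theorem at all: it is imported verbatim from the cited reference~\cite{Dezani} (the characterisation of invertible terms in the untyped $\lambda\beta\eta$-calculus), so there is no in-paper proof to compare yours against. Your attempt therefore has to be judged on its own merits as a reconstruction of that classical result.

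Your forward direction is complete and correct: the explicit candidate $P^{-1}=\lambda x y_1\dots y_n.\,x\,(P^{-1}_{\pi^{-1}(1)}y_{\pi^{-1}(1)})\dots(P^{-1}_{\pi^{-1}(n)}y_{\pi^{-1}(n)})$ does invert $P$, and the verification is exactly the routine computation you describe (one checks $P_i\,M_{\pi(i)}=P_i(P_i^{-1}y_i)$ collapses by the induction hypothesis, then an $\eta$-step finishes). The converse, however, is a roadmap rather than a proof, and it stops precisely where the theorem's content lies. The first two stages (invertibility forces solvability, hence a finite B\"ohm tree, hence a normal form) are standard and fine. But in the third stage the sentence ``any bound variable that were duplicated or dropped would make information irrecoverable'' is the conclusion you need, not an argument for it: in the untyped calculus, terms that duplicate or discard variables can still compose to something $\beta\eta$-equal to the identity in principle, so for each putative violation of the \fhp{} shape (head variable not $x_0$, number of arguments not matching the abstractions, some $y_i$ occurring zero or more than one time, an argument $N_i$ containing two of the $y_j$'s) you must actually exhibit why $P\circ P^{-1}=_{\beta\eta}\lambda x.x$ fails. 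In the original proof this is done by a B\"ohm-out/separability argument: one substitutes suitably chosen combinators for the head variable and projects, deriving a contradiction with invertibility. You have correctly located the obstacle and you say so honestly, but until that machinery is supplied the hard direction remains unproved.
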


This result,  obtained in the framework of the untyped
$\lambda$-calculus, has been the basis for studying type
isomorphism in different type systems for the $\lambda$-calculus.
Note that
%the \fhp s have closed $\lambda$-normal forms; moreover
every \fhp\  $P$ has, modulo $\beta \eta$-conversion, a unique
inverse $P^{-1}$.
Even if in the type free $\lambda$-calculus \fhp s are defined modulo $\beta\eta$-conversion~\cite{Dezani}, in this paper \fhp s are considered only modulo $\beta$-conversion, because types are not invariant under $\eta$-reduction.
Taking into account these properties, the definition of type
isomorphism can be
stated as follows:
\begin{definition}[Type Isomorphism]\labelx{ti}
Two types $\sigma$  and  $\tau$ are {\em isomorphic} ($
\sigma \iso \tau$) if there exists a pair $<P,P^{-1}>$ of \fhp s, inverse of each other, such that $\vdash
P\dup \sigma \rightarrow \tau$ and $\vdash P^{-1}
\dup \tau\rightarrow\sigma$. The pair
$<P,P^{-1}>$
\textit{proves} the isomorphism.
\end{definition}
\noindent
When $P=P^{-1}$
one can simply write ``$P$ proves the isomorphism''.

\smallskip

\noindent
It is immediate to verify that type isomorphism is an equivalence relation.

\smallskip

It is useful to single out \fhp s, which only use the identity permutation, and the induced isomorphisms.

\begin{definition} [Finite Hereditary Identity]\labelx{fhi}
A {\em finite hereditary identity} (\fhi) is a $\lambda$-term of the form (modulo $\beta$-conversion)
\myformula{$\lambda x y_{1}\dots y_{n}.x(\id_1 y_{1})\dots (\id_n y_{n}) \; \; \;(n \geq  0)$}
where $\id_1,\ldots,\id_n$ are \fhi s.
\end{definition}
\noindent
The  $\beta$-normal forms of \fhi s are  obtained from the identity $\lambda x.x$ through a finite (possibly zero) number of $\eta$-expansions. Then by Theorem~\ref{setheorem} $\vdash \id\dup\tA\to\tA$ for all \fhi s $\id$ and all $\tA$.

\begin{definition} [Strong Type Isomorphism]\labelx{sti}
Two types $\sigma$  and  $\tau$ are {\em strongly isomorphic} ($
\sigma \isos \tau$) if their isomorphism is proved by an \fhi.
\end{definition}

Notice that
%a strong isomorphism can be proved by a unique \fhi\ since types are preserved by $\eta$-expansion (Theorem~\ref{setheorem}).}
requiring the isomorphism be proved by a pair of \fhi s (instead of a single \fhi) gives an equivalent definition of strong isomorphism, since types are preserved by $\eta$-expansion (Theorem~\ref{setheorem}).

\smallskip

Isomorphism does not imply strong isomorphism, for example $\lambda xyz.xzy$ proves $\tu\to\varphi\to\varphi\iso\varphi\to\varphi$, but $\tu\to\varphi\to\varphi\not\isos\varphi\to\varphi$.
Moreover semantic type equivalence implies strong type isomorphism, i.e. $\tA\esim\tB$ implies $\tA\isos\tB$, but   the inverse does not hold, since $\lambda x.x$ proves $\tA\vee\tB\isos\tB\vee\tA$, but $\tA\vee\tB\not\esim\tB\vee\tA$.

\smallskip

      It is useful to consider some strong isomorphisms, which are directly related to set theoretic properties of intersection and union and to standard properties of functional types. Moreover, all these isomorphisms are provable equalities in the system {\bf B$_+$} of relevant logic~\cite{SE3}.

  \begin{lemma}\labelx{arrowIso} The following strong isomorphisms hold:
\myformula{$\begin{array}{llll}
 \mbox{\rm{\bfseries idem}.}&  \tA \wedge \tA \isos \tA,\ \tA \vee \tA \isos \tA&
     \mbox{\rm{\bfseries comm}.}&  \tA \wedge \tB \isos \tB\wedge \tA,\ \tA \vee \tB \isos \tB\vee \tA\\
  \multicolumn{4}{l}{ \mbox{\rm{\bfseries assoc}.} \quad\quad (\tA \wedge \tB)\wedge \tC \isos \tA\wedge (\tB\wedge \tC),~~ (\tA \vee \tB)\vee \tC \isos \tA\vee (\tB\vee \tC)}\\
    %\mbox{\rm{\bfseries swap}.}& \tA\to \tB \to \tC \isos \tB \to \tA  \to  \tC\\
  \mbox{\rm{\bfseries dist$\wedge\vee$}.}&   (\tA\vee \tB) \wedge \tC  \isos (\tA\wedge \tC)\vee (\tB \wedge \tC) &
\mbox{\rm{\bfseries dist$\vee\wedge$}.}& (\tA\wedge\tB)\vee\tC \isos (\tA\vee\tC)\wedge(\tB\vee\tC) \\
 \mbox{\rm{\bfseries dist$\to\wedge$}.}& \tA  \to \tB\wedge \tC \isos (\tA \to \tB)  \wedge (\tA \to \tC) &
    \mbox{\rm{\bfseries dist$\to\vee$}.}& \tA\vee \tB \to \tC \isos (\tA \to \tC)  \wedge (\tB \to \tC)
\end{array}$}
\end{lemma}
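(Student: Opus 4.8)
The plan is to realise every listed pair by a finite hereditary identity that is typable in both directions; by Definition~\ref{sti} this is precisely what strong isomorphism demands. Both $\lambda x.x$ and its one-step $\eta$-expansion $\lambda xy.xy$ are \fhi s that are their own inverse up to $\beta\eta$, so in each case a single term realises both halves, and it remains only to produce the two derivations $\vdash P\dup\tA\to\tA'$ and $\vdash P\dup\tA'\to\tA$.

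First, for the purely structural isomorphisms — \textbf{idem}, \textbf{comm} and \textbf{assoc} in both their $\wedge$ and $\vee$ forms, together with the two lattice distributivities \textbf{dist$\wedge\vee$} and \textbf{dist$\vee\wedge$} — I claim the plain identity $P=\lambda x.x$ suffices. In each direction I begin with $(Ax)$, rearrange the intersection/union structure of the single hypothesis using $(\wedge I)$, $(\wedge E)$, $(\vee I)$ and the admissible $(\vee I')$, $(\vee E')$, and finish with $(\to I)$; the realising term is unchanged because the substitution $M[N/x]$ produced by the elimination rules collapses back to the identity body. Idempotency, commutativity and associativity are immediate. For \textbf{dist$\wedge\vee$} the forward direction is exactly the derivation already used inside the proof of Theorem~\ref{setheorem}, namely $x\dup(\tA\vee\tB)\wedge\tC\vdash x\dup(\tA\wedge\tC)\vee(\tB\wedge\tC)$, obtained from $(\vee E)$ with its built-in $\wedge\tC$ component, while the converse follows from $(\vee I)$ and $(\vee E')$; \textbf{dist$\vee\wedge$} is handled dually.

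The two arrow distributivities are genuinely different, since the identity cannot realise them: from a hypothesis of shape $\tA\to\tB\wedge\tC$ one cannot reach $\tA\to\tB$, and from $\tA\vee\tB\to\tC$ one cannot reach $\tA\to\tC$, without actually applying the variable. Here I take the \fhi\ $P=\lambda xy.xy$. For \textbf{dist$\to\wedge$}, the forward direction assumes $x\dup\tA\to\tB\wedge\tC$, applies it to $y\dup\tA$, splits $xy$ by $(\wedge E)$ and re-abstracts to give $\lambda y.xy\dup\tA\to\tB$ and $\lambda y.xy\dup\tA\to\tC$, then concludes by $(\wedge I)$ and $(\to I)$; the converse extracts the two arrow types by $(\wedge E)$ and recombines them through Corollary~\ref{cor}(\ref{cor1}) into $\tA\to\tB\wedge\tC$. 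The argument for \textbf{dist$\to\vee$} is dual: the forward direction inserts $(\vee I)$ on the argument before applying $x$, and the converse invokes Corollary~\ref{cor}(\ref{cor2}) to merge $\tA\to\tC$ and $\tB\to\tC$ into $\tA\vee\tB\to\tC$.

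The part I expect to cost the most effort is bookkeeping rather than insight: the harder half of the lattice distributivities — for instance recombining $(\tA\vee\tC)\wedge(\tB\vee\tC)$ into $(\tA\wedge\tB)\vee\tC$ for \textbf{dist$\vee\wedge$} — requires \emph{nested} applications of $(\vee E)$ that use the $\wedge\tD$ slot of that rule to carry the second union past the first, and in the arrow cases one must apply $(\vee E')$ (packaged inside Corollary~\ref{cor}(\ref{cor2})) carefully enough that the single application $xy$ receives the same type $\tC$ under both branches of the union. Since none of these manipulations alters the realising term, the witness remains an \fhi\ in every case, and the isomorphisms are therefore strong, as claimed.
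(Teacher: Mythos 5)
Your proposal is correct and takes essentially the same route as the paper, whose entire proof is the observation that $\lambda x.x$ proves all the listed isomorphisms except \textbf{dist$\to\wedge$} and \textbf{dist$\to\vee$}, which are proved by the $\eta$-expansion $\lambda xy.xy$. You simply spell out the typing derivations (including the nested uses of $(\vee E)$ for the lattice distributivities) that the paper leaves implicit.
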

\begin{proof}
The identity $\lambda x.x$ proves all these isomorphisms except the last two, proved by the $\eta$-expansion of the identity $\lambda xy.xy$.
\end{proof}

As regards to type interpretations, if $\tA$ is included in $\tB$, the intersection $\tA\wedge\tB$ is set-theoretically equal to $\tA$ and the union $\tA\vee\tB$ is set-theoretically equal to $\tB$. So, it is handy to introduce a pre-order on types which formalises set-theoretic inclusion and which takes into account the meaning of the arrow type constructor and the semantic type equivalence given in Definition~\ref{eq}. This pre-order is dubbed normalisation pre-order being used in the next section to define normalisation rules (Definition~\ref{rrt}).
\begin{definition}[Normalisation pre-order on types]\labelx{pt}
The {\em normalisation relation $\leq$ on types} is the minimal pre-order relation such that:
\myformula{$\begin{array}{c}
%\tS \leq  \tS \quad
%\quad
\tS \leq  \omega  \quad \quad
 \tS \wedge  \tT \leq  \tS \quad \quad \tS \wedge  \tT \leq  \tT \quad \quad \tS \leq  \tS \vee  \tT \quad \quad \tT \leq  \tS \vee  \tT
   \\[1mm]
    %
%\tS \leq  \tT,\;  \tT \leq \tR \; \Rightarrow
%              \tS \leq \tR   \quad \quad
              \tS \leq  \tT,\; \tS \leq  \tR \Rightarrow
              \tS  \leq  \tT \wedge  \tR \quad \quad \tS \leq  \tT,\; \tR \leq  \tT \Rightarrow
              \tS  \vee  \tR \leq  \tT
              \\[1mm]
\varphi\leq\tS\to\varphi \quad \quad \tu\leq\tS\to\tu\quad \quad \tS' \leq  \tS, \;  \tT \leq  \tT'
             \Rightarrow  \tS\to\tT  \leq  \tS'
             \to  \tT'
%\\[1mm]
%   %
% (\tA \to \tB) \wedge (\tA \to \tC)\leq\tA \to \tB \wedge \tC \quad\quad (\tA \to \tC) \wedge (\tB \to \tC)\leq \tA \vee\tB \to  \tC
 \end{array}$}
\end{definition}
Notice that $\tS \leq  \omega$ agrees with $\tS \wedge  \omega\esim\tS$. Moreover $\varphi\leq\tS\to\varphi$ and $\tu\leq\tS\to\tu$ are justified by $\varphi\esim\tu\to\varphi$, $\omega \esim \omega \to \omega$ and the contra-variance of $\leq$ for arrow types.

\smallskip

The soundness of the normalisation pre-order follows from the following lemma, which shows the expected isomorphisms. To prove this lemma it is useful to observe that  for
each \fhi\  $\id$,  different from the identity, one gets $\id\mtob\lambda xy. \id_1(x(\id_2y))$ for some %unique
\fhi s $\id_1,\id_2$.
For example,  for  \myformula{$\id$ = $\lambda x y_1 y_2 y_3.x(\lambda t.y_1t) y_2 (\lambda u_1 u_2.y_3 u_1 u_2)$} one has $\id_1 = \lambda x y_2 y_3.x y_2 (\lambda u_1 u_2.y_3 u_1 u_2)$ and $\id_2 = \lambda x t.x t$.

\medskip

 The following lemma proves the validity of two more strong isomorphisms:
\myformula{ \mbox{\rm{\bfseries erase}.}\qquad  if $\tA\leq\tB$ \qquad then \qquad
$\tS \wedge  \tT\isos\tS$ \qquad and \qquad $\tS \vee  \tT\isos\tT$}

\begin{lemma}\labelx{ma}
\begin{enumerate}
%\item\labelx{ma1} Let  $\id$  be an \fhi,  then   $\vdash  \id \dup \tA \to \tA$  for every type $\tA$ .
\item\labelx{ma2} If $\tS \leq  \tT$, then there is an \fhi\ $\id$ such that $\vdash  \id \dup \tA \to \tB$.
\item\labelx{ma3} If $\tS \leq  \tT$, then $\tS \wedge  \tT\isos\tS$ and $\tS \vee \tT\isos\tT$.
%\item\labelx{ma4} If $\tS \leq  \tT$, then .
\end{enumerate} \end{lemma}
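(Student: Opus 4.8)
The plan is to prove item~\ref{ma2} by induction on the derivation of $\tS \leq \tT$ according to Definition~\ref{pt}, and then to derive item~\ref{ma3} from it. For item~\ref{ma2} most cases are witnessed by small \fhi s. The reflexive axiom, as well as $\tS\wedge\tT\leq\tS$, $\tS\wedge\tT\leq\tT$, $\tS\leq\tS\vee\tT$, $\tT\leq\tS\vee\tT$ and $\tS\leq\omega$, are all witnessed by the identity $\lambda x.x$, using respectively rules $(\wedge E)$, $(\vee I)$ and the admissible rule $(\tu)$. For the axioms $\varphi\leq\tS\to\varphi$ and $\omega\leq\tS\to\omega$ I would use the single $\eta$-expansion $\lambda xy.xy$ of the identity: using $\varphi\esim\omega\to\varphi$ (resp.\ $\omega\esim\omega\to\omega$) to type $x$ and rule $(\tu)$ to type $y$ at $\omega$, one derives $\vdash\lambda xy.xy\dup\varphi\to\tS\to\varphi$ (resp.\ $\vdash\lambda xy.xy\dup\omega\to\tS\to\omega$) by $(\to E)$ and $(\to I)$. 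Transitivity is handled by composing the two \fhi s given by the induction hypothesis, which is again an \fhi\ and receives the composite arrow type by $(\to E)$ and $(\to I)$. Finally the contravariant arrow rule, from \fhi s $\id_1$ with $\vdash\id_1\dup\tS'\to\tS$ and $\id_2$ with $\vdash\id_2\dup\tT\to\tT'$, is witnessed by the explicit term $\lambda xy.\id_2(x(\id_1 y))$, which is an \fhi\ by the decomposition recalled before the lemma and receives the type $(\tS\to\tT)\to\tS'\to\tT'$ by a direct derivation.

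The delicate cases are the rule forming an intersection on the right ($\tS\leq\tT,\ \tS\leq\tR\Rightarrow\tS\leq\tT\wedge\tR$) and the one forming a union on the left ($\tS\leq\tT,\ \tR\leq\tT\Rightarrow\tS\vee\tR\leq\tT$). Here the induction hypothesis supplies two \fhi s $\id_1$ and $\id_2$ that need not coincide, whereas Corollary~\ref{cor} requires a \emph{single} term carrying both arrow types. I expect this to be the main obstacle, and the plan is to resolve it with an auxiliary observation: any two \fhi s admit a common $\eta$-expansion, i.e.\ an \fhi\ $\id$ with $\id\longrightarrow_{\eta}^*\id_1$ and $\id\longrightarrow_{\eta}^*\id_2$, obtained by merging the finite $\eta$-expansion trees of $\id_1$ and $\id_2$ (bringing them to a common arity by adding identity arguments, then recursing on the subtrees). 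Subject Expansion (Theorem~\ref{setheorem}) then transfers both inductively derived types to this single $\id$, after which item~\ref{cor1} of Corollary~\ref{cor} yields $\vdash\id\dup\tS\to\tT\wedge\tR$, and item~\ref{cor2} yields $\vdash\id\dup\tS\vee\tR\to\tT$.

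For item~\ref{ma3} I would exploit the fact that every \fhi\ is $\beta\eta$-equal to the identity, its $\beta$-normal form being an $\eta$-expansion of $\lambda x.x$; hence any \fhi\ is self-inverse up to $\beta\eta$, and it suffices to exhibit a single \fhi\ carrying both directions of the isomorphism. Starting from the \fhi\ $\id$ with $\vdash\id\dup\tS\to\tT$ given by item~\ref{ma2}, and using that $\vdash\id\dup\tA\to\tA$ for every type $\tA$ (the remark after Definition~\ref{fhi}), item~\ref{cor1} of Corollary~\ref{cor} gives $\vdash\id\dup\tS\to\tS\wedge\tT$, while Subject Expansion from the $(\wedge E)$ derivation $\vdash\lambda x.x\dup\tS\wedge\tT\to\tS$ gives $\vdash\id\dup\tS\wedge\tT\to\tS$; thus $\id$ proves $\tS\wedge\tT\isos\tS$. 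Dually, item~\ref{cor2} of Corollary~\ref{cor} gives $\vdash\id\dup\tS\vee\tT\to\tT$, and Subject Expansion from the $(\vee I)$ derivation $\vdash\lambda x.x\dup\tT\to\tS\vee\tT$ gives $\vdash\id\dup\tT\to\tS\vee\tT$, so $\id$ proves $\tS\vee\tT\isos\tT$.
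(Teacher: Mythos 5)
Your proposal is correct and follows essentially the same route as the paper: induction on the definition of $\leq$ with the same witnesses ($\lambda xy.xy$ for $\varphi\leq\tS\to\varphi$, composition for transitivity, $\lambda xy.\id_2(x(\id_1 y))$ for the arrow rule), resolving the intersection-right and union-left cases by passing to a common $\eta$-expansion of the two \fhi s via Subject Expansion and then applying Corollary~\ref{cor}, and deriving item~\ref{ma3} from item~\ref{ma2} exactly as the paper does. The only difference is that you spell out the easy cases and the construction of the common $\eta$-expansion in more detail than the paper, which merely asserts its existence.
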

\begin{proof}

 (\ref{ma2}). The proof is by induction on the definition of $\leq$. Only interesting cases are considered. \\
 In case $\tA\leq\tC$ and $\tC\leq\tB$ imply $\tA\leq\tB$,  by the induction hypothesis there are \fhi s $\id_1$, $\id_2$ such that $\vdash \id_1\dup \tA\to\tC$  and $\vdash \id_2\dup \tC\to\tB$. This implies $\vdash \lambda x. \id_2(\id_1x)\dup \tA\to\tB$. It is easy to verify that $\lambda x. \id_2(\id_1x)$ reduces to an \fhi.\\
 In case $\tA\leq\tB$ and $\tA\leq\tC$ imply $\tA\leq\tB\wedge\tC$, by the induction hypothesis there are \fhi s $\id_1$, $\id_2$ such that $\vdash \id_1\dup \tA\to\tB$  and $\vdash \id_2\dup \tA\to\tC$. By Subject Reduction (Theorem~\ref{srtheorem}) $\vdash \id'_1\dup \tA\to\tB$  and $\vdash \id'_2\dup \tA\to\tC$, where $\id'_1$ and $\id'_2$ are the $\beta$-normal forms of $\id_1$ and $\id_2$, respectively.
 By Subject Expansion (Theorem~\ref{setheorem}) there is an \fhi\ $\id$, $\eta$-expansion of both $\id'_1$ and $\id'_2$,
 %such that $\id\longrightarrow_\eta^*\id_1$ and $\id\longrightarrow_\eta^*\id_2$
   such that $\vdash \id\dup \tA\to\tB$  and $\vdash \id\dup \tA\to\tC$; by Corollary~\ref{cor}(\ref{cor1}) $\vdash \id\dup \tA\to\tB\wedge\tC$.
 For the case $\tA\leq\tB$ and $\tC\leq\tB$ imply $\tA\vee\tC \leq \tB$, the proof is similar.\\
 In case $\varphi\leq\tA\to\varphi$, one can derive $y\dup\tS\vdash y\dup\tu$ by rule $(\tu)$, and $x\dup\varphi\vdash x\dup\tu\to\varphi$ by rule $(\esim)$. Then $\vdash\lambda xy. xy\dup\varphi\to\tA\to\varphi$ holds by rules $(\to E)$ and $(\to I)$.\\
 In case $\tA'\leq\tA$ and $\tB\leq\tB'$ imply $\tA\to\tB\leq\tA'\to\tB'$, by the induction hypothesis there are \fhi s $\id_1$, $\id_2$ such that \mbox{$\vdash \id_2\dup \tA'\to\tA$}  and $\vdash \id_1\dup \tB\to\tB'$. This implies $\vdash \lambda xy. \id_1(x(\id_2y))\dup (\tA\to\tB)\to\tA'\to\tB'$. \\
 (\ref{ma3}). By point (\ref{ma2}) there is an \fhi\ $\id$ such that $\vdash  \id \dup \tA \to \tB$. Clearly $\vdash  \id \dup \tA \to \tA$. Corollary~\ref{cor}(\ref{cor1}) gives $\vdash  \id \dup \tA \to \tA\wedge\tB$.
  %Lastly
 Since, obviously,  $\vdash  \lambda x.x \dup \tA\wedge\tB \to \tA$, Theorem~\ref{setheorem} assures that $\id$ proves the strong isomorphism $\tA\wedge\tB \isos\tA$. In a similar way one proves that there is an $\id$
  %such that $\id \dup \tA\vee \tB \to \tB$; moreover $\vdash  \lambda x.x \dup \tB\to\tB \vee \tA$.
  proving $\tA\vee\tB \isos\tB$.

\end{proof}

Strong isomorphism is a congruence, as shown in the following lemma, where
\emph{ type contexts} are defined as usual:
\myformula{$\C{~}~::=~[~] ~\mid ~\C{~}\to\tS~\mid~ \tS\to\C{~}~\mid~ \tS\wedge\C{~}~\mid~\C{~}\wedge\tS ~\mid~ \tS\vee\C{~}~\mid~\C{~}\vee\tS$}

\begin{lemma}\labelx{in}
If $\tA\isos\tB$, then $\C\tA\isos\C\tB$.
\end{lemma}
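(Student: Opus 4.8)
The plan is to argue by induction on the structure of the type context $\C{~}$. In the base case $\C{~}=[~]$ one has $\C\tA=\tA$ and $\C\tB=\tB$, so the claim is exactly the hypothesis. In the inductive case $\C{~}$ is obtained by filling one slot of a type constructor with a strictly smaller context $\DD$, and the induction hypothesis yields $\DD[\tA]\isos\DD[\tB]$. By the remark following Definition~\ref{sti}, this strong isomorphism may be taken to be witnessed by a pair $\langle\id_1,\id_2\rangle$ of mutually inverse \fhi s with $\vdash\id_1\dup\DD[\tA]\to\DD[\tB]$ and $\vdash\id_2\dup\DD[\tB]\to\DD[\tA]$. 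It then suffices to show that strong isomorphism is closed under each of the six one-step constructors $(\cdot)\wedge\tS$, $\tS\wedge(\cdot)$, $(\cdot)\vee\tS$, $\tS\vee(\cdot)$, $(\cdot)\to\tS$ and $\tS\to(\cdot)$.

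For the intersection and union constructors I would invoke Corollary~\ref{cor}(\ref{cor5}) together with the fact, recorded after Definition~\ref{fhi}, that every \fhi\ $\id$ satisfies $\vdash\id\dup\tS\to\tS$ for every type $\tS$. Since $\vdash\id_1\dup\tS\to\tS$ and $\vdash\id_1\dup\DD[\tA]\to\DD[\tB]$, Corollary~\ref{cor}(\ref{cor5}) gives $\vdash\id_1\dup\tS\wedge\DD[\tA]\to\tS\wedge\DD[\tB]$ and $\vdash\id_1\dup\tS\vee\DD[\tA]\to\tS\vee\DD[\tB]$, and symmetrically $\id_2$ types the two converse arrows. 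The pair $\langle\id_1,\id_2\rangle$ is unchanged, hence still mutually inverse, so it proves $\tS\wedge\DD[\tA]\isos\tS\wedge\DD[\tB]$ and $\tS\vee\DD[\tA]\isos\tS\vee\DD[\tB]$; the constructors $(\cdot)\wedge\tS$ and $(\cdot)\vee\tS$ are handled identically.

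For the arrow constructors I would exhibit the witnessing \fhi s explicitly. For the covariant slot $\tS\to(\cdot)$, the term $\lambda fz.\id_1(fz)$ has type $(\tS\to\DD[\tA])\to(\tS\to\DD[\tB])$ and $\lambda fz.\id_2(fz)$ has type $(\tS\to\DD[\tB])\to(\tS\to\DD[\tA])$, both derivable by $(\to I)$ and $(\to E)$ from the induction hypothesis. For the contravariant slot $(\cdot)\to\tS$, where the isomorphism must act in the domain, the term $\lambda fz.f(\id_2 z)$ has type $(\DD[\tA]\to\tS)\to(\DD[\tB]\to\tS)$ and $\lambda fz.f(\id_1 z)$ has type $(\DD[\tB]\to\tS)\to(\DD[\tA]\to\tS)$. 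In each case the term has the shape $\lambda xy.\id'_1(x(\id'_2 y))$ recorded for \fhi s just before Lemma~\ref{ma}, with one of $\id'_1,\id'_2$ taken to be the identity, and so is itself an \fhi; moreover a short $\beta\eta$-computation using $\id_1\circ\id_2=_{\beta\eta}\lambda x.x$ shows that the two terms in each pair compose to the identity, so they prove the required isomorphism.

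The typing derivations and the inverse checks are routine. The only delicate point is the contravariant arrow case: because the hole sits in a negative position, one must feed the domain through the \emph{inverse} component of the witnessing pair, and verify that the resulting terms are still \fhi s and still mutually inverse. Granting the six one-step closures, the outer induction on $\C{~}$ then delivers $\C\tA\isos\C\tB$ for every context $\C{~}$.
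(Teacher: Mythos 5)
Your proof is correct and follows essentially the same route as the paper: structural induction on the context, the unchanged witness for the $\wedge$ and $\vee$ constructors via Corollary~\ref{cor}(\ref{cor5}), and the terms $\lambda xy.\id(xy)$ and $\lambda xy.x(\id\, y)$ for the covariant and contravariant arrow slots. The only (harmless) presentational difference is that you carry a pair of mutually inverse \fhi s, whereas the paper exploits the fact that a single \fhi\ is its own inverse and types both directions at once.
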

\begin{proof}
The proof is by structural induction on type contexts. For the empty context it is trivial.
%\mc{$\id_{[]}$ is the FHI proving $\tA\isos\tB$ (see the remark after definition~\ref{ti}).}
For any other context $\C{~}$,  an \fhi\ $\id_{\C{~}}$ that proves the isomorphism $\C\tA \isos \C\tB$ is given by:
\myformula{$\begin{array}{lll}
 \id_{\C{~}\to\tC}\tob\lambda xy.x(\id_{\C{~}} y)&\qquad\qquad\qquad&
\id_{ \tC\to\C{~}}\tob\lambda xy.\id_{\C{~}}(xy)\\
\id_{\tC \wedge\C{~}} = \id_{\C{~} \wedge \tC} = \id_{\C{~}}&&
\id_{\tC \vee\C{~}} = \id_{\C{~} \vee \tC} = \id_{\C{~}}
\end{array}$}\vspace{-3mm}
\end{proof}

Owing to this lemma, types can be considered modulo idempotence, commutativity and associativity.

%\smallskip
%

\section{Normalisation}\labelx{nos}
To investigate type isomorphism, following  a common approach~\cite{BruceDicosmoLongo92,MSCSSurvey05,DDGT10,CDMZ13},
      a notion of \emph{normal form} of types is introduced. {\em Normal type} is short for type in normal form. The notion of normal form is effective, as shown by Theorem~\ref{uni}.
      %since an algorithm to find the normal form of an arbitrary type can be given.

\smallskip

Type normalisation rules are introduced together with the proof of their soundness.

\begin{definition}[Type normalisation rules]\labelx{rrt}
\begin{enumerate}
\item \labelx{rrt1} The inner type normalisation rules are:\\[2pt]
%\myformulaB{
$\hspace*{-8mm}\begin{array}{ll}
(\varphi\rightsquigarrow)\quad
\omega\to\varphi \rightsquigarrow\varphi & \quad
(\tu\rightsquigarrow)\quad
\omega \leq \tS \text{ and } \tS\not=\tu \text{ imply }\tS \rightsquigarrow \omega \\[1mm] %\qquad
(\wedge \rightsquigarrow)\quad
\tS\to\tT\wedge\tR \rightsquigarrow(\tS\to\tT)\wedge(\tS\to\tR)& \quad
(\to_\wedge\rightsquigarrow)\quad
(\tA\vee\tT)\wedge\tC \to\tD \rightsquigarrow (\tA\wedge\tC)\vee (\tB \wedge \tC)\to\tD \\[1mm] %\qquad
(\vee \rightsquigarrow)\quad
\tS\vee\tT\to\tR\rightsquigarrow(\tS\to\tR)\wedge(\tT\to\tR)
& \quad(\to_\vee\rightsquigarrow)\quad \tS\to (\tB\wedge\tC)\vee\tD \rightsquigarrow \tA\to(\tB\vee\tD)\wedge(\tC\vee\tD)
\\[1mm] %\qquad
 \multicolumn{2}{c}{(\leq\rightsquigarrow)\quad
 \tS \leq  \tT\text{ implies }\tS \wedge  \tT\rightsquigarrow\tS \text{ and } \tS \vee  \tT\rightsquigarrow\tT }
\end{array}$
%}
\item \labelx{rrt2} The top type normalisation rules are:
\myformulaB{
$\begin{array}{ll}
 (\text{ctx}\rr)\quad
 \tA \rightsquigarrow  \tB \text{ implies } \C{\tA} \red  \C{\tB} \quad & \quad (\vee\wedge\rr)\quad
 (\tA\wedge\tB) \vee\tC \red (\tA \vee \tC) \wedge (\tB\vee\tC)
\end{array}$
}
\end{enumerate}
\end{definition}
The first two rules follow immediately from semantic type equivalence; moreover, since $\omega\leq\tA\to\omega$, an admissible rule is \hbox{$\tS\to\omega \rightsquigarrow\omega$.} The following four rules correspond to the distribution isomorphisms.  The last rule corresponds to the  \textbf{erase} isomorphism.
Note that in the inner rules $(\to_\wedge\rightsquigarrow)$ and $(\to_\vee\rightsquigarrow)$ the isomorphism \mbox{\rm{\bfseries dist$\wedge\vee$}} is used only %when the redex occurs
on the left of an arrow and the isomorphism \mbox{\rm{\bfseries dist$\vee \wedge$}} is used %only when the redex occurs
on the right of an arrow, respectively. These rules generate  normal forms for arrow types in which the type on the left is an intersection and the type on the right is a union. Moreover the top rule
$(\vee\wedge\rr)$ allows one to define for types a ``conjunctive" normal form.

For example:
\myformula{$ (\varphi_1\to\varphi_2)\wedge\varphi_2\rightsquigarrow\varphi_2\text{ which implies }
((\varphi_1\to\varphi_2)\wedge\varphi_2)\vee\varphi_3\red\varphi_2\vee\varphi_3$}
\myformula{$(\varphi_1\to\varphi_2)\vee\varphi_2\rightsquigarrow\varphi_1\to\varphi_2\text{ which implies }
((\varphi_1\to\varphi_2)\vee\varphi_2)\wedge\varphi_3\red(\varphi_1\to\varphi_2)\wedge\varphi_3$}

\smallskip

Having two kinds of normalisation rules (inner and top) allows to apply only one of  the isomorphisms  \mbox{\rm{\bfseries dist$\wedge\vee$}} and \mbox{\rm{\bfseries dist$\vee \wedge$}} at each subtype of a type. This is crucial to assure termination of normalisation.

The present normalisation rules are much simpler than those in~\cite{CDMZ13a}. The functional behaviour of atomic types produces this simplification.
\begin{theorem}[Soundness of the normalisation rules]\labelx{srr}
\begin{enumerate}
\item \labelx{srr1}If $\tS \rightsquigarrow \tT$, then $\tS \isos \tT$.
\item \labelx{srr2}If $\tS \red \tT$, then $\tS \isos \tT$.
\end{enumerate}
\end{theorem}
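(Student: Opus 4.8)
The plan is to prove both parts by a case analysis on which single rewrite step is performed, discharging each case by an isomorphism already established. Since strong isomorphism is an equivalence relation, it suffices to treat one-step rewrites; the statement for the reflexive--transitive closures then follows by transitivity of $\isos$. I would prove part~(\ref{srr1}) first and reuse it inside part~(\ref{srr2}).

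For part~(\ref{srr1}) I run through the inner rules. The rule $(\varphi\rightsquigarrow)$ is immediate: $\varphi\esim\omega\to\varphi$ is one of the clauses of Definition~\ref{eq}, and semantic equivalence implies strong isomorphism. The two rules $(\wedge\rightsquigarrow)$ and $(\vee\rightsquigarrow)$ are literally the isomorphisms \textbf{dist$\to\wedge$} and \textbf{dist$\to\vee$} of Lemma~\ref{arrowIso}, and the rule $(\leq\rightsquigarrow)$ is exactly the \textbf{erase} isomorphism, i.e. Lemma~\ref{ma}(\ref{ma3}). So these cases are quotations of earlier results.

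The remaining inner rules need one extra move each. For $(\to_\wedge\rightsquigarrow)$ the inner isomorphism $(\sigma\vee\tau)\wedge\tC\isos(\sigma\wedge\tC)\vee(\tau\wedge\tC)$ is \textbf{dist$\wedge\vee$} of Lemma~\ref{arrowIso}; lifting it through the type context $\C{~}=[\,]\to\tD$ by the congruence Lemma~\ref{in} gives the required isomorphism of the whole arrow types. Dually, $(\to_\vee\rightsquigarrow)$ uses \textbf{dist$\vee\wedge$} lifted through $\C{~}=\sigma\to[\,]$. The only case combining two facts is $(\tu\rightsquigarrow)$, where $\omega\leq\tS$ and $\tS\neq\omega$: here $\tS\esim\omega\wedge\tS$ (Definition~\ref{eq}) gives $\tS\isos\omega\wedge\tS$, while the hypothesis $\omega\leq\tS$ together with Lemma~\ref{ma}(\ref{ma3}) gives $\omega\wedge\tS\isos\omega$; transitivity of $\isos$ then yields $\tS\isos\omega$.

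Part~(\ref{srr2}) has only two rules. For $(\text{ctx}\rr)$, the premise $\tA\rightsquigarrow\tB$ gives $\tA\isos\tB$ by part~(\ref{srr1}), and the congruence Lemma~\ref{in} propagates it to $\C\tA\isos\C\tB$. For $(\vee\wedge\rr)$ the conclusion is again \textbf{dist$\vee\wedge$} of Lemma~\ref{arrowIso}. I expect no deep obstacle: the argument is essentially a dispatch table to Lemmas~\ref{arrowIso}, \ref{ma} and~\ref{in}. The only points demanding care are (i) invoking the congruence Lemma~\ref{in} with the correct one-hole context for the two rules that distribute inside an arrow, which is precisely what the separation between inner and top rules is designed to permit, and (ii) the small two-step transitivity argument isolating the $(\tu\rightsquigarrow)$ case.
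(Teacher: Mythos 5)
Your proof is correct and follows essentially the same route as the paper: a case-by-case dispatch of each normalisation rule to the strong isomorphisms of Lemma~\ref{arrowIso}, the \textbf{erase} isomorphisms of Lemma~\ref{ma}(\ref{ma3}), and the congruence Lemma~\ref{in}. The only (harmless) divergence is the $(\tu\rightsquigarrow)$ case, where the paper directly exhibits a single \fhi\ via Lemma~\ref{ma}(\ref{ma2}) (an $\id$ with $\vdash\id\dup\tu\to\tA$ and, trivially, $\vdash\id\dup\tA\to\tu$), whereas you compose $\tA\isos\tu\wedge\tA\isos\tu$; both are fine.
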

\begin{proof}
(\ref{srr1}). Rule $(\varphi\rightsquigarrow)$ is obtained by orienting the equivalence relation between types, so it is sound since equivalent types are isomorphic. Rule $(\tu\rightsquigarrow)$ is sound because, by Lemma~\ref{ma}(\ref{ma2}), there is  an \fhi    $\;\id$  such that $\der{}\id{\omega \to  \tS}$, and obviously $\der{}{\id}{\tS \to  \omega}$.  Rules $(\wedge\rightsquigarrow)$,   $(\to_\wedge\rightsquigarrow)$, $(\vee\rightsquigarrow)$ and $(\to_\vee\rightsquigarrow)$  are sound by the strong isomorphisms %\mbox{(\rm{{\bfseries dist$\to\wedge$}})} and \mbox{(\rm{{\bfseries dist$\to\vee$}})}.
of Lemma~\ref{arrowIso}.
 Lemma~\ref{ma}(\ref{ma3}) implies the soundness of rule $(\leq\rightsquigarrow)$. \\
(\ref{srr2}). The soundness of the rule $(\text{ctx}\rr)$ is proved in Lemma~\ref{in}. The strong isomorphism \mbox{\rm{\bfseries dist$\vee\wedge$}} gives the soundness of rule $(\vee\wedge\rr)$.

\end{proof}
For example $((\varphi_1\to\varphi_2)\wedge\varphi_2)\vee\varphi_3\red\varphi_2\vee\varphi_3$, as shown before, and  $\lambda xy.xy$ proves \myformula{$((\varphi_1\to\varphi_2)\wedge\varphi_2)\vee\varphi_3\isos\varphi_2\vee\varphi_3$.}

\smallskip

The following theorem shows the existence and uniqueness of the normal forms, i.e. that the top normalisation rules are terminating and confluent.

\begin{theorem} [Uniqueness of normal form]\labelx{uni}
The top normalisation rules of Definition~\ref{rrt} are terminating and confluent.
\end{theorem}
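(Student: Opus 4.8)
The plan is to prove termination first and then obtain confluence from Newman's Lemma, since a terminating relation is confluent as soon as it is locally confluent. So the two tasks are (i) to exhibit a well-founded measure on types that strictly decreases along every $\red$-step, and (ii) to check that all critical pairs of the rewrite system are joinable.

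For termination I would assign to every type a natural number $\|\tS\|$ by structural recursion, setting $\|\varphi\| = \|\tu\| = 2$ and choosing the interpretations of the connectives so that each rule of Definition~\ref{rrt} strictly decreases $\|\cdot\|$. The erasing rules $(\varphi\rightsquigarrow)$, $(\tu\rightsquigarrow)$ and $(\leq\rightsquigarrow)$ delete a subtype and will lower any reasonable additive measure. The delicate rules are the distributions $(\wedge\rightsquigarrow)$, $(\vee\rightsquigarrow)$, $(\to_\wedge\rightsquigarrow)$, $(\to_\vee\rightsquigarrow)$ and the top rule $(\vee\wedge\rr)$, because each duplicates a subtype and hence increases the plain size of the type. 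The key idea is to interpret one Boolean connective multiplicatively, e.g.\ $\|\tS\vee\tT\| = \|\tS\|\cdot\|\tT\|$ while $\|\tS\wedge\tT\| = \|\tS\|+\|\tT\|+1$, together with a suitable interpretation of $\to$: with atoms valued at least $2$, the surplus produced by the duplicated copy is dominated by the ``$+1$'' slack, so that the two sides of $(\vee\wedge\rr)$ evaluate to $(\|\tA\|+\|\tB\|+1)\,\|\tC\|$ and $\|\tA\|\,\|\tC\|+\|\tB\|\,\|\tC\|+1$, the former being strictly larger by $\|\tC\|-1$. I expect that no single multiplicative interpretation will simultaneously handle the two arrow-distributions $(\to_\wedge\rightsquigarrow)$, $(\to_\vee\rightsquigarrow)$ and the Boolean ones, so I would combine two interpretations lexicographically: a first component counting the ``$\wedge$-below-$\vee$'' and ``$\wedge$-right-of-$\to$'' redex patterns, which the distribution rules strictly lower and no other rule raises, and the multiplicative weight above as a tie-breaker. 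Constructing this one measure and verifying strict decrease for all eight rules at once is the main obstacle; crucially it rests on the design noted after Definition~\ref{rrt}, namely that at any given subtype only one of \textbf{dist}$\wedge\vee$ and \textbf{dist}$\vee\wedge$ can fire, which is exactly what rules out the obvious non-terminating back-and-forth.

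Once termination is established, confluence follows from local confluence by Newman's Lemma, so it remains to inspect the critical pairs. Two redexes contracted at disjoint positions commute trivially, so I only need the genuine overlaps: nested distributions (one distribution redex sitting inside the duplicated part of another), an overlap of $(\leq\rightsquigarrow)$ with a distribution acting on the same $\wedge$ or $\vee$, overlaps of $(\vee\wedge\rr)$ with itself on a type with several stacked $\vee$ and $\wedge$, and overlaps of $(\vee\wedge\rr)$ with the inner rules. Each such pair is joined by performing the remaining distribution(s) on both reducts and then collapsing with $(\leq\rightsquigarrow)$ where duplication has created comparable copies; this analysis is considerably shortened by Lemma~\ref{in}, which lets me reason modulo idempotence, commutativity and associativity and thus identify the many reducts that differ only by the order or multiplicity of $\wedge$/$\vee$ arguments. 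Having shown every critical pair joinable, local confluence holds, and Newman's Lemma yields confluence; together with termination this gives the existence and uniqueness of normal forms asserted by the theorem.
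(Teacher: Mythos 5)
Your overall architecture is the same as the paper's: establish termination, then get confluence from Newman's Lemma by joining critical pairs. The confluence half of your plan is essentially the paper's argument (the paper likewise just lists the overlaps -- e.g.\ $(\tA\wedge\tB)\vee\tC$ with $\tA\leq\tB$, $\tC\to(\tA\wedge\tB)\vee\tD$, $(\tA\vee\tB)\wedge\tC\to\tD$, $(\omega\to\varphi)\wedge\sigma$ -- and checks they converge), so that part is fine.

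The gap is in your termination measure. You correctly identify the central difficulty -- the system distributes $\wedge$ over $\vee$ to the left of arrows but $\vee$ over $\wedge$ at top level and to the right of arrows, so no single polynomial interpretation can orient all rules -- but the fix you propose does not work as stated. Your first lexicographic component is a count of ``$\wedge$-below-$\vee$'' redex patterns, which you claim the distribution rules strictly lower and no other rule raises. This fails precisely because the distribution rules \emph{duplicate} subterms: in $(\tA\wedge\tB)\vee\tC\red(\tA\vee\tC)\wedge(\tB\vee\tC)$ the type $\tC$ is copied, and if $\tC$ itself contains occurrences of the counted pattern, the count goes up, not down; the same happens with the duplicated $\tC$ and $\tD$ in $(\to_\wedge\rightsquigarrow)$ and $(\to_\vee\rightsquigarrow)$. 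So the lexicographic pair you describe is not even weakly decreasing in its first component, and the argument collapses. The paper avoids this by using a recursive path ordering rather than a weight: the operator precedence is made position-dependent ($\to\succ\vee\succ\wedge$ at top level and to the right of arrows, $\to\succ\wedge\succ\vee$ to the left of arrows), and RPO absorbs duplication automatically via the subterm property, since each duplicated argument of the contractum is compared against the whole redex. To salvage your approach you would need something with the same positional sensitivity baked into the interpretation itself -- e.g.\ two mutually recursive weights, one for positive and one for negative (left-of-arrow) positions, with $\vee$ multiplicative in the positive one and $\wedge$ multiplicative in the negative one -- rather than a global pattern count layered on top of a single weight.
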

\begin{proof}The {\em termination} follows from an easy adaptation of the recursive path ordering method~\cite{D82}. The partial order on operators is defined by: $\to~\succ~\vee~\succ~\wedge$ for  holes at top level or in the right-hand-sides of arrow types and  $\to~\succ~\wedge~\succ~\vee$ for holes in the left-hand-sides of arrow types. Notice that the induced recursive path ordering $\succ^*$ has the subterm property. This solves the case of rules $(\varphi\rightsquigarrow)$, $(\tu\rightsquigarrow)$, $(\leq\rightsquigarrow)$.
For
rule $(\wedge \rightsquigarrow)$, since $\to~\succ~\wedge$, it is enough to observe that $\tS\to\tT\wedge\tR~\succ^*~\tS\to\tT$ and        $\tS\to\tT\wedge\tR~\succ^*~\tS\to\tR$.
For rules $(\to_\wedge\rightsquigarrow)$ and $(\vee\wedge\rr)$, since $\vee~\succ~\wedge$ for holes at top level or in the right-hand-sides of arrow types, it is enough to observe that $(\tS\wedge\tT)\vee\tR~\succ^*~\tS\vee\tR$ and        $(\tS\wedge\tT)\vee\tR~\succ^*~\tT\vee\tR$. The proof for the remaining rules are similar.

 For {\em confluence}, thanks to the Newman Lemma~\cite{N42}, it is sufficient to  prove  the convergence of the critical
 pairs. For example, the  types $\tA\vee\tB\vee\tC\to\tD,$ and $\tA\to\tB\wedge\tC\wedge\tD$ give rise %(modulo commutativity and associativity of union and intersection)
 to critical pairs, as well as the following ones, when $\tA\leq\tB$:
 \myformula{ $(\tA\wedge\tB)\vee \tC $,\qquad $ \tC \to
(\tA\wedge\tB) \vee \tD$,\qquad $(\tA\vee\tB) \wedge \tC \to \tD$, \qquad
$ \tC \to
(\tA \vee\tB)\wedge\tD$,\qquad$\tC \vee (\tA\wedge\tB)\to \tD$.
}
Other examples of critical pairs are
%\myformula{
$(\omega\to\varphi)\wedge\sigma$ if $\omega\to\varphi\leq\sigma$, and $\sigma\vee\omega$ if $\omega\leq\sigma.
$%}
%\vspace{-4mm}
\end{proof}

The
%(modulo idempotence, commutativity  and associativity of $\wedge$)
normal form of a type $\tS$, unique %idempotence,
modulo commutativity and associativity, is denoted by $\nf{\tS}$. The soundness of the normalisation rules (Theorem~\ref{srr}) implies that each type is strongly isomorphic to its normal form.
\begin{corollary}\label{inf} $\tA\isos\nf\tA$.
\end{corollary}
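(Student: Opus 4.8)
The plan is to read the result off directly from the termination of the top normalisation rules together with their soundness, using that strong isomorphism is a transitive relation.

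First I would invoke Theorem~\ref{uni}: since the top normalisation rules are terminating, there is a finite sequence of top steps reducing $\tA$ to a type $\tT$ on which no further top rule applies, i.e. to a normal form. By the confluence part of Theorem~\ref{uni}, this $\tT$ coincides with $\nf\tA$ up to commutativity and associativity.

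Each single step of this sequence is isomorphism preserving: by Theorem~\ref{srr}(\ref{srr2}), every top reduction $\tS \red \tS'$ satisfies $\tS \isos \tS'$. To chain these steps into $\tA \isos \tT$ I use that $\isos$ is an equivalence relation. Reflexivity holds because the identity is an \fhi\ with $\vdash \lambda x.x \dup \tA\to\tA$; transitivity holds because \fhi s are closed under composition, since given proving \fhi s $\id_1$ and $\id_2$ the term $\lambda x. \id_2(\id_1 x)$ is typable by $(\to E)$ and $(\to I)$ and reduces to an \fhi, exactly as in the proof of Lemma~\ref{ma}(\ref{ma2}). Composing the per-step isomorphisms along the whole reduction therefore yields $\tA \isos \tT$.

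Finally I would absorb the commutativity/associativity discrepancy between $\tT$ and $\nf\tA$: by the strong isomorphisms \textbf{comm} and \textbf{assoc} of Lemma~\ref{arrowIso}, lifted through arbitrary type contexts by the congruence Lemma~\ref{in}, one gets $\tT \isos \nf\tA$, and a last application of transitivity delivers $\tA \isos \nf\tA$. I do not expect any genuine obstacle here: the real work is already discharged in Theorem~\ref{uni} (termination and confluence) and Theorem~\ref{srr} (soundness of each rule), so the only point to verify carefully is that $\isos$ is transitive, which is precisely what allows the single-step isomorphisms to be composed along the entire normalisation sequence.
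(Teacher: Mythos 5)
Your argument is correct and follows exactly the route the paper intends: the paper justifies Corollary~\ref{inf} in one line by appealing to Theorem~\ref{srr} (each normalisation step preserves $\isos$) together with termination from Theorem~\ref{uni}, chaining the steps by transitivity of $\isos$. Your additional care about transitivity of $\isos$ (composition of \fhi s) and about the commutativity/associativity discrepancy, handled via Lemmas~\ref{arrowIso} and~\ref{in}, only makes explicit what the paper leaves implicit.
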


As expected, semantic equivalent types have the same normal form. Clearly the inverse is false, since, for example, $\nf{(\tA\to\tB\wedge\tC)}=(\tA\to\tB)\wedge(\tA\to\tC)$, but $\tA\to\tB\wedge\tC\not\esim(\tA\to\tB)\wedge(\tA\to\tC)$.

\begin{lemma}\labelx{nc}
If  $\tS\esim \tB$, then $\nf{\tS} = \nf{\tB}$.
\end{lemma}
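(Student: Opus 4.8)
The plan is to exploit that, by Definition~\ref{eq}, $\esim$ is the \emph{minimal} congruence generated by the six basic axiom pairs, and to show that normalisation collapses both members of each basic pair to a common normal form in every context. First I would recall the standard presentation of a minimal congruence: $\sigma\esim\tau$ holds if and only if there is a finite chain $\sigma=\rho_0,\rho_1,\dots,\rho_k=\tau$ in which each $\rho_{i+1}$ arises from $\rho_i$ by rewriting a single occurrence of one side of a basic axiom into the other, that is $\rho_i=\C{\tA}$ and $\rho_{i+1}=\C{\tA'}$ for some one-hole type context $\C{~}$ and some basic pair $\{\tA,\tA'\}$. Since equality of normal forms is transitive, it then suffices to prove $\nf{\rho_i}=\nf{\rho_{i+1}}$ for each elementary step, hence to establish the single claim: for every basic axiom pair $\{\tA,\tA'\}$ and every context $\C{~}$, one has $\nf{\C{\tA}}=\nf{\C{\tA'}}$.

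The observation driving the claim is that each of the six basic equivalences of Definition~\ref{eq} is, read in one direction, a single \emph{inner} normalisation step of Definition~\ref{rrt}. Indeed, $\varphi\esim\omega\to\varphi$ is the rule $(\varphi\rightsquigarrow)$, namely $\omega\to\varphi\rightsquigarrow\varphi$; the equivalence $\omega\esim\omega\to\omega$ is an instance of $(\tu\rightsquigarrow)$, since $\omega\leq\omega\to\omega$ follows from $\tu\leq\tS\to\tu$ in Definition~\ref{pt}, giving $\omega\to\omega\rightsquigarrow\omega$; the pairs $\tA\esim\tA\wedge\omega$ and $\tA\esim\omega\wedge\tA$ are instances of $(\leq\rightsquigarrow)$ using $\tA\leq\omega$, namely $\tA\wedge\omega\rightsquigarrow\tA$ (the second modulo commutativity); and dually $\omega\esim\tA\vee\omega$ and $\omega\esim\omega\vee\tA$ are instances of $(\leq\rightsquigarrow)$, namely $\tA\vee\omega\rightsquigarrow\omega$ (again the second modulo commutativity). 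Thus in every case, after orienting the pair appropriately, one side reduces to the other by a single application of an inner rule $\rightsquigarrow$.

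The remaining steps are then immediate. Since inner rules are closed under arbitrary contexts by rule $(\text{ctx}\rr)$, from $\tA\rightsquigarrow\tA'$ I obtain $\C{\tA}\red\C{\tA'}$ for every context $\C{~}$. By Theorem~\ref{uni} the top normalisation rules are terminating and confluent, so any two types connected by $\red^*$ share a unique normal form; applying this to $\C{\tA}\red\C{\tA'}$ yields $\nf{\C{\tA}}=\nf{\C{\tA'}}$. This establishes the claim, and the chain argument gives $\nf{\sigma}=\nf{\tau}$.

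I expect the only delicate point to be the verification at the heart of the second paragraph, rather than any computation: one must check that \emph{each} of the six generators is genuinely realised by an inner rewrite (in particular that the $\omega$-on-the-left variants are handled by commutativity, which is legitimate because normal forms are taken modulo idempotence, commutativity and associativity). It is worth emphasising that only inner rules, which are context-closed, are needed here; the top rule $(\vee\wedge\rr)$, whose deliberately restricted applicability is not preserved under arbitrary contexts, plays no role, so its asymmetry raises no obstacle to the congruence step.
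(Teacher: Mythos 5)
Your proof is correct and follows essentially the same route as the paper's: each generator of $\esim$ is oriented into an inner normalisation step (using $(\varphi\rightsquigarrow)$, $(\tu\rightsquigarrow)$ via $\omega\leq\omega\to\omega$, and $(\leq\rightsquigarrow)$ with $\tA\leq\omega$, the symmetric variants being absorbed by commutativity of normal forms), and the congruence closure is handled through rule $(\text{ctx}\rr)$ together with uniqueness of normal forms. You merely make explicit two points the paper leaves implicit, namely the chain decomposition of the minimal congruence and the appeal to Theorem~\ref{uni} to pass from $\C{\tA}\red\C{\tA'}$ to equality of normal forms.
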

\begin{proof} The proof is by cases on Definition~\ref{eq}. For the equivalences $\varphi\esim\omega\to\varphi$ and $\omega\esim\omega\to\omega$, rules $(\varphi\rightsquigarrow)$ and $(\omega\rightsquigarrow)$ give $\nf{(\omega\to\varphi)}=\varphi$ and $\nf{(\omega\to\omega)}=\omega$, respectively. For the equivalences $\tS\esim\omega\wedge\tS$, $\tS\esim\tS\wedge\omega$, $\omega\esim\omega\vee\tS$ and $\omega\esim\tS\vee\omega$, rule $(\leq\rightsquigarrow)$, with $\tA\leq\tu$, gives $\nf{(\omega\wedge\tS)}=\nf{(\tS\wedge\omega)}=\tS$ and $\nf{(\omega\vee\tS)}=\nf{(\tS\vee\omega)}=\omega$. The congruence follows from rule $(\text{ctx}\rr)$.
\end{proof}

\section{Similarity as Isomorphism}\labelx{sse}
This section shows the main result of the paper, i.e. that two types with ``similar'' normal forms (Definition \ref{simil}) are isomorphic.
The basic aim of the similarity relation is that of formalising
isomorphism determined by argument permutations (as in the swap
equation). This relation has  to take into
account the fact that, for two types to be isomorphic, it is not
sufficient that they coincide modulo permutation of types in the
arrow sequences, as in the case of cartesian products. Indeed the
same permutation must be applicable to all types
in the corresponding intersections and unions. The key notion of similarity exactly expresses such a condition.

To define similarity, it is useful to distinguish between different kinds of types. So in the following:
\begin{itemize}
\item $\tO, \tP $  range over atomic and normal arrow types, i.e. $\tO ::=\omega\mid \varphi \mid \te \to \tM$;
\item $\te,\ti$ range over normal intersections of atomic and arrow types, i.e. $\te~::=\tO~\mid~\nf{(\te\wedge\te)}$;
\item $\tM,\tN$  range over normal unions of atomic and arrow types, i.e. $\tM~::=\tO~\mid~\nf{(\tM\vee\tM)}$;
\item $\tQ,\tY$ range over normal types, i.e. $\tQ~::=\tM~\mid~\nf{(\tQ\wedge\tQ)}$.
\end{itemize}

\begin{definition}[Similarity]\labelx{simil}
The {\em similarity} relation between two sequences of normal types $\seq{\tQ_1}{\tQ_m}$ and $\seq{\tY_1}{\tY_m}$, written $\seq{\tQ_1}{\tQ_m}\tsi \seq{\tY_1}{\tY_m}$,
is the smallest equivalence relation such that:
\begin{enumerate}
\item\labelx{simil1} $\seq{\tQ_1}{\tQ_m} \tsi \seq{\tQ_1}{\tQ_m}$ %if $\tO_1,\ldots,\tO_m$ are normal types;
\item\labelx{simil2} if $\langle{\tQ_1},\ldots,\tQ_i,\tQ_{i+1},\ldots,{\tQ_m}\rangle\tsi \langle{\tY_1},\ldots,\tY_i,\tY_{i+1},\ldots,{\tY_m}\rangle$, then
\myformula{
$\langle{\tQ_1},\ldots,\nf{(\tQ_i\wedge\tQ_{i+1})},\ldots,{\tQ_m}\rangle\tsi \langle{\tY_1},\ldots,\nf{(\tY_i\wedge\tY_{i+1})},\ldots,{\tY_m}\rangle$ and}
\myformula{$\langle{\tQ_1},\ldots,\nf{(\tQ_i\vee\tQ_{i+1})},\ldots,{\tQ_m}\rangle\tsi \langle{\tY_1},\ldots,\nf{(\tY_i\vee\tY_{i+1})},\ldots,{\tY_m}\rangle$;}
\item\labelx{simil3} if $\seq{\te^{(1)}_i}{\te^{(m)}_i} \tsi \seq{\ti^{(1)}_i}{\ti^{(m)}_i}$ for $1\leq i\leq n$ and $\seq{\tM_1}{\tM_m}\tsi\seq{\tN_1}{\tN_m}$, then
\myformula{$\begin{array}{l}\seq{\nf{(\te^{(1)}_1 \to \ldots \to \te^{(1)}_n\to \tM_1)}}{\nf{(\te^{(m)}_1 \to \ldots \to \te^{(m)}_n\to \tM_m)}}\tsi\\[2pt] \seq{\nf{(\ti^{(1)}_{\pi(1)} \to \ldots \to \ti^{(1)}_{\pi(n)}\to \tN_1)}}{\nf{(\ti^{(m)}_{\pi(1)} \to \ldots \to \ti^{(m)}_{\pi(n)}\to \tN_m)}},\end{array}$}
where $\pi$ is a permutation of $1,\dots, n$.
\end{enumerate}
\emph{Similarity between normal types} is trivially defined as similarity between unary sequences:
$\tQ\tsi\tY$ if $\seqs\tQ\tsi\seqs\tY$.
\end{definition}
 \noindent
 For example, $ \seqs{\varphi_1,\tu}\tsi\seqs{\varphi_1,\tu}$, $ \seqs{\tu,\varphi_2}\tsi\seqs{\tu,\varphi_2}$, $ \seqs{\varphi_3,\varphi_4}\tsi\seqs{\varphi_3,\varphi_4}$ imply
 \myformula{$ \seqs{\varphi_1\to\varphi_3,\tu\to\varphi_2\to\varphi_4}\tsi\seqs{\tu\to\varphi_1\to\varphi_3,\varphi_2\to\varphi_4}$ by (\ref{simil3}) and then }
 \myformula{$ \seqs{(\varphi_1\to\varphi_3)\vee(\tu\to\varphi_2\to\varphi_4)}\tsi\seqs{(\tu\to\varphi_1\to\varphi_3)\vee(\varphi_2\to\varphi_4)}$ by (\ref{simil2})}. This, together with  $\seqs{\varphi_5}\tsi\seqs{\varphi_5}$,   $\seqs{\varphi_6}\tsi\seqs{\varphi_6}$, $\seqs{\varphi_7}\tsi\seqs{\varphi_7}$, gives \myformula{$ \seqs{\varphi_5\to\varphi_6\to\varphi_7\to(\varphi_1\to\varphi_3)\vee(\tu\to\varphi_2\to\varphi_4)}\tsi\seqs{\varphi_7\to\varphi_5\to\varphi_6\to(\tu\to\varphi_1\to\varphi_3)\vee(\varphi_2\to\varphi_4)}$} by  (\ref{simil3}).

\smallskip

The proof of the similarity soundness requires some ingenuity.
\begin{theorem}[Soundness]\labelx{ParteVera} If $\seq{\tQ_1}{\tQ_m}\tsi \seq{\tY_1}{\tY_m}$, then
there is a pair of \fhp s %$<P, P^{-1}\!\!>$
that proves $\tQ_j\iso\tY_j$, for
$1\leq j\leq m$.
% $\te_j\iso\ti_j$, for $1\leq j\leq m$.
\end{theorem}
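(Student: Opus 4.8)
The plan is to argue by induction on the generation of the similarity relation $\tsi$ as the smallest equivalence relation closed under clauses~(\ref{simil1})--(\ref{simil3}) of Definition~\ref{simil}. Throughout I maintain exactly the invariant asserted in the statement: for each way of deriving $\seq{\tQ_1}{\tQ_m}\tsi\seq{\tY_1}{\tY_m}$ I produce a \emph{single} pair $\langle P,P^{-1}\rangle$ of \fhp s, inverse of each other, with $\vdash P\dup\tQ_j\to\tY_j$ and $\vdash P^{-1}\dup\tY_j\to\tQ_j$ simultaneously for every $j$. The requirement that one pair serve all components is the crux, and it is what forces the common permutation $\pi$ in clause~(\ref{simil3}). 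The reflexivity clause~(\ref{simil1}) is handled by the identity $\lambda x.x$; symmetry by swapping the pair to $\langle P^{-1},P\rangle$; and transitivity by composing the two pairs given by the hypotheses, using that \fhp s are closed under composition and inversion.

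For clause~(\ref{simil2}) the induction hypothesis supplies a pair $\langle P,P^{-1}\rangle$ for the length-$m$ premise, and I claim a correction of it works for the length-$(m-1)$ conclusion obtained by merging positions $i,i+1$. Since $P$ is, modulo $\beta$, an abstraction $\lambda x.M$, Corollary~\ref{cor}(\ref{cor5}) upgrades $\vdash P\dup\tQ_i\to\tY_i$ and $\vdash P\dup\tQ_{i+1}\to\tY_{i+1}$ to $\vdash P\dup\tQ_i\wedge\tQ_{i+1}\to\tY_i\wedge\tY_{i+1}$ (and the $\vee$ variant), dually for $P^{-1}$. To pass to the normal forms I compose $P$ with the \fhi s witnessing $\nf{(\tQ_i\wedge\tQ_{i+1})}\isos\tQ_i\wedge\tQ_{i+1}$ and $\tY_i\wedge\tY_{i+1}\isos\nf{(\tY_i\wedge\tY_{i+1})}$ (Corollary~\ref{inf}); the composite is again an \fhp. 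Crucially these corrective \fhi s may be inserted globally without disturbing the untouched components, because every \fhi\ is typable as $\tA\to\tA$ for every $\tA$ (remark after Definition~\ref{fhi}), so the corrected pair still proves $\tQ_j\iso\tY_j$ for $j\neq i,i+1$.

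The main case is clause~(\ref{simil3}). The hypotheses give, for each argument position $i$, a pair $\langle Q_i,Q_i^{-1}\rangle$ proving $\te^{(k)}_i\iso\ti^{(k)}_i$ for all $k$, and a pair $\langle R,R^{-1}\rangle$ proving $\tM_k\iso\tN_k$ for all $k$. I then propose the candidate
\[
P \;=\; \lambda x y_1\ldots y_n.\, R\bigl(x\,(Q_1^{-1}y_{\pi^{-1}(1)})\cdots(Q_n^{-1}y_{\pi^{-1}(n)})\bigr),
\]
with $P^{-1}$ defined dually from the $Q_i$, $R^{-1}$ and $\pi^{-1}$. Three verifications are needed. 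First, $P$ is genuinely an \fhp: as $R$ has the form $\lambda w\vec u.\,w(\ldots)$, the redex $R(x\,\vec z)$ contracts and $R$'s outer binder is absorbed into the spine, so the $\beta$-normal form of $P$ matches Definition~\ref{fhp}, with the $Q_i^{-1}$ and the components of $R$ as sub-\fhp s under a combined permutation. Second, this \emph{one} $P$ types for every $k$ at once: feeding $x:\te^{(k)}_1\to\cdots\to\te^{(k)}_n\to\tM_k$ the arguments $z_i=Q_i^{-1}y_{\pi^{-1}(i)}$ of type $\te^{(k)}_i$ gives $x\,\vec z:\tM_k$, whence $R(x\,\vec z):\tN_k$; the single common $\pi$ is precisely what makes the same reindexing of the $y$'s valid for all $k$. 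Third, the $\nf{\cdot}$ wrapping the arrow types is absorbed exactly as in clause~(\ref{simil2}), composing into $P$ the \fhi s for $\sigma_k\isos\nf{\sigma_k}$ and $\tau_k\isos\nf{\tau_k}$ without breaking the other components.

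The hardest point, where I would spend the most care, is the uniform typing in clause~(\ref{simil3}): the fixed term $P$ must be typed against \emph{all} the arrow types $\sigma_k\to\tau_k$ simultaneously, even though the codomains $\tM_k,\tN_k$ are unions of atomic and arrow types whose apparent arity may vary with $k$, while $R$ is one fixed \fhp. This is exactly what the functional reading of atomic types buys: through $\varphi\esim\omega\to\varphi$ and rule $(\esim)$ an \fhp\ of fixed arity can be retyped at an atomic codomain, so the uniform typability follows from the hypothesis for $R$ together with Subject Reduction and Subject Expansion (Theorems~\ref{srtheorem} and~\ref{setheorem}), used to reconcile $P$ with its $\beta$-normal \fhp\ form. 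That $\langle P,P^{-1}\rangle$ is an inverse pair then follows from the uniqueness of inverses of \fhp s and from the $\langle Q_i,Q_i^{-1}\rangle$, $\langle R,R^{-1}\rangle$ being inverse pairs, so the assembled term and its dual compose to the identity modulo $\beta\eta$.
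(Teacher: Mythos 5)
Your proposal is correct and follows essentially the same route as the paper's own proof: the same induction on the generation of $\tsi$, the same use of Corollary~\ref{cor}(\ref{cor5}) plus composition with \fhi s (typable at $\tA\to\tA$ for every $\tA$) in clause~(\ref{simil2}), and the very same permutator $P=\lambda x y_1\ldots y_n.\,P_*(x(P_1^{-1}y_{\pi^{-1}(1)})\cdots(P_n^{-1}y_{\pi^{-1}(n)}))$ in clause~(\ref{simil3}), with the normal-form wrappers discharged via $\varphi\esim\omega\to\varphi$ and rule $(\esim)$ exactly as the paper does. The only cosmetic difference is that the paper absorbs the $\nf{\cdot}$ in clause~(\ref{simil3}) purely by rule $(\esim)$ (no change of term), whereas you first suggest composing in corrective \fhi s before falling back on the same $(\esim)$ argument.
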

\begin{proof}
By induction on the definition of
$\tsi$ (Definition \ref{simil}). %, one shows that $\seq{\te_1}{\te_m}\tsi \seq{\ti_1}{\ti_m}$ implies that

(\ref{simil1}). $\seq{\tQ_1}{\tQ_m}\tsi \seq{\tQ_1}{\tQ_m}$. The identity proves the isomorphism.

(\ref{simil2}).
$\langle{\tQ_1},\ldots,\nf{(\tQ_i\wedge\tQ_{i+1})},\ldots,{\tQ_m}\rangle\tsi \langle{\tY_1},\ldots,\nf{(\tY_i\wedge\tY_{i+1})},\ldots,{\tY_m}\rangle$~~ since ~~
$\langle{\tQ_1},\ldots,\tQ_i,\tQ_{i+1},\ldots,{\tQ_m}\rangle\tsi $ $ \langle{\tY_1},\ldots,\tY_i,\tY_{i+1},\ldots,{\tY_m}\rangle$.
By the
induction hypothesis there is a pair $<P, P^{-1}\!\!>$ that proves $\tQ_{j}\iso\tY_{j}$, for
$1\leq j\leq m$. By Corollary \ref{cor}(\ref{cor5}), the same pair proves $\tQ_i\wedge\tQ_{i+1}\iso\tY_i\wedge\tY_{i+1}$. By Theorem \ref{srr} there are \fhi s $\id_1,\id_2$ such that
$\id_1$ proves $\tQ_i\wedge\tQ_{i+1}\iso\nf{(\tQ_i\wedge\tQ_{i+1})}$ and $\id_2$ proves $\tY_i\wedge\tY_{i+1}\iso\nf{(\tY_i\wedge\tY_{i+1})}$. Clearly %By Lemma \ref{ma}(\ref{ma1})
\mbox{$\vdash\id_1:\tQ_j\to\tQ_j$} and $\vdash\id_2:\tY_j\to\tY_j$ for $1\leq j\leq m$.
Then the pair \mbox{$<\lambda x.\id_2(P(\id_1x)), \lambda x.\id_1(P^{-1}(\id_2x))\!>$} proves the required isomorphisms.
The proof for the case\\ \centerline{$\langle{\tQ_1},\ldots,\nf{(\tQ_i\vee\tQ_{i+1})},\ldots,{\tQ_m}\rangle\tsi \langle{\tY_1},\ldots,\nf{(\tY_i\vee\tY_{i+1})},\ldots,{\tY_m}\rangle$,} since
$\langle{\tQ_1},\ldots,\tQ_i,\tQ_{i+1},\ldots,{\tQ_m}\rangle\tsi \langle{\tY_1},\ldots,\tY_i,\tY_{i+1},\ldots,{\tY_m}\rangle$, is analogous.\\

(\ref{simil3}).
$$\seq{\nf{(\te^{(1)}_1 \to \ldots \to \te^{(1)}_n\to
\tM_1)}}{\nf{(\te^{(m)}_1 \to \ldots \to \te^{(m)}_n\to \tM_m)}}\tsi $$
\vspace{-5mm}
$$\seq{\nf{(\ti^{(1)}_{\pi(1)} \to \ldots \to \ti^{(1)}_{\pi(n)}\to
\tN_1)}}{\nf{(\ti^{(m)}_{\pi(1)} \to \ldots \to \ti^{(m)}_{\pi(n)}\to
\tN_m)}}$$
 since $\seq{\te^{(1)}_i}{\te^{(m)}_i} \tsi \seq{\ti^{(1)}_i}{\ti^{(m)}_i}$ for $1\leq i\leq n$ and $\seq{\tM_1}{\tM_m}\tsi\seq{\tN_1}{\tN_m}$.
 By the induction hypothesis, there are pairs $<P_i,P^{-1}_{i}\!\!>$ proving $\te_{i}^{(j)}\iso\ti_{i}^{(j)}$ and a pair $<P_*,P^{-1}_*\!\!>$ proving $\tM_j\iso\tN_j$ for $1\leq i\leq n$  and $1\leq j\leq m$. Let\\
  \centerline{$\begin{array}{lll}
P&=&\lambda x y_{1}\ldots y_{n}.(P_*(x  (P^{-1}_{1} y_{\pi^{-1}(1)})\ldots (P^{-1}_{n} y_{\pi^{-1}(n)})))\\
P^{-1}&=&
\lambda x y_{1}\ldots y_{n}. (P_*^{-1}(x (P_{\pi(1)} y_{\pi(1)})\ldots (P_{\pi(n)} y_{\pi(n)})))
\end{array}$}
 It is easy to verify that
  \myformula{$\begin{array}{c}
 \vdash P\dup (\te^{(j)}_1 \to \ldots \to \te^{(j)}_n\to
\tM_j)\to \ti^{(j)}_{\pi(1)} \to \ldots \to \ti^{(j)}_{\pi(n)}\to
\tN_j
\\
\vdash P^{-1}\dup
 (\ti^{(j)}_{\pi(1)} \to \ldots \to \ti^{(j)}_{\pi(n)}\to
\tN_j)\to
 \te^{(j)}_1 \to \ldots \to \te^{(j)}_n\to
\tM_j
 \end{array}$}
 for $1\leq j\leq m$. Notice that $\nf{(\te_1\to\ldots \to\te_h\to\tM)}=\begin{cases}
\te_1\to\ldots \to\te_k\to\tM      & \text{if } \te_{k}\not=\tu\text{ and }\te_{k+1}=\ldots=\te_h=\tu\\
&\text{ and } \tM \text{ is an atomic type}, \\
 \te_1\to\ldots \to\te_h\to\tM     & \text{otherwise}
\end{cases}$\\[3pt]
since $\te_1,\ldots ,\te_h$ are normal intersections of atomic and arrow types and $\tM$ is a normal union of atomic and arrow types. Then $\te_1\to\ldots \to\te_h\to\tM\esim\nf{(\te_1\to\ldots \to\te_h\to\tM)}$, and,
by the typing rule $(\esim)$:
 \myformula{$\begin{array}{c}
 \vdash P\dup \nf{(\te^{(j)}_1 \to \ldots \to \te^{(j)}_n\to
\tM_j)}\to \nf{(\ti^{(j)}_{\pi(1)} \to \ldots \to \ti^{(j)}_{\pi(n)}\to
\tN_j)}
\\
\vdash P^{-1}\dup
 \nf{(\ti^{(j)}_{\pi(1)} \to \ldots \to \ti^{(j)}_{\pi(n)}\to
\tN_j)}\to
 \nf{(\te^{(j)}_1 \to \ldots \to \te^{(j)}_n\to
\tM_j)}
 \end{array}$}
 for $1\leq j\leq m$.
 So $<P,P^{-1}>$ is the required pair.

\end{proof}

An immediate implication of the Soundness Theorem and of Corollary~\ref{inf} is that two types  with similar normal forms are isomorphic.

\begin{corollary} \label{sum}
If $\nf\tA \tsi \nf\tB$, then $\tA \iso \tB$.
\end {corollary}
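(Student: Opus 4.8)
The plan is to read off the result by chaining three isomorphisms through transitivity, using the fact that type isomorphism is an equivalence relation. Concretely, I would first invoke Corollary~\ref{inf} to obtain the strong isomorphisms $\tA\isos\nf\tA$ and $\tB\isos\nf\tB$. Since every \fhi\ is an \fhp\ (namely one whose underlying permutation is the identity), a strong isomorphism is in particular an isomorphism, so these upgrade to $\tA\iso\nf\tA$ and $\tB\iso\nf\tB$.

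Next I would feed the hypothesis into the Soundness Theorem. By definition, similarity between the normal types $\nf\tA$ and $\nf\tB$ is exactly similarity between the unary sequences, i.e.\ $\seqs{\nf\tA}\tsi\seqs{\nf\tB}$. Applying Theorem~\ref{ParteVera} in the special case $m=1$ then yields a pair of \fhp s proving $\nf\tA\iso\nf\tB$.

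Finally I would close the argument by transitivity: from $\tA\iso\nf\tA$, $\nf\tA\iso\nf\tB$, and $\nf\tB\iso\tB$ (the last obtained by symmetry of the equivalence relation applied to $\tB\iso\nf\tB$), I conclude $\tA\iso\tB$. The composite \fhp\ realising the isomorphism is obtained by composing the three witnessing \fhp s, which is legitimate because \fhp s are closed under composition (as remarked after Definition~\ref{fhp}) and composition of inverses gives the inverse of the composite.

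There is essentially no obstacle here: the corollary is a direct packaging of the two substantive results (Corollary~\ref{inf} and Theorem~\ref{ParteVera}). The only point demanding a word of care is the implicit observation that strong isomorphism implies isomorphism, so that the normal-form isomorphisms supplied by Corollary~\ref{inf} can be composed, inside the single relation $\iso$, with the permutation-induced isomorphism delivered by Soundness.
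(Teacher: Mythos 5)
Your argument is correct and is exactly the paper's intended reasoning: the corollary is obtained by chaining $\tA\isos\nf\tA$ (Corollary~\ref{inf}), the $m=1$ case of Theorem~\ref{ParteVera}, and $\nf\tB\isos\tB$, using transitivity of $\iso$ and closure of \fhp s under composition. The paper leaves this as ``immediate,'' and your explicit unpacking, including the observation that strong isomorphism implies isomorphism, fills in precisely the right details.
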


For example the isomorphism of the types, shown similar after Definition~\ref{simil}, is proved by
\myformula{$<\lambda xy_1y_2y_3y_4y_5. xy_3y_1y_2y_5y_4, \lambda x y_1y_2y_3y_4y_5. xy_2y_3y_1y_5y_4>$.}

\section{Conclusion}\label{conc}

This paper studies type isomorphism for a typed $\lambda$-calculus with intersection and union types, in which all types have a functional character.
 Atomic types become types of functions by assuming
 an equivalence relation that equates any atomic type $\varphi$ to  $\tu\to\varphi$. This equivalence has been introduced in~\cite{CDHL84} for constructing a filter model isomorphic to Scott's $D_\infty$ and it is validated by the standard interpretation of types in this model. In the so obtained type system all types which are set-theoretically equal (using idempotence, commutativity, associativity and distributivity of intersection and union) are proved isomorphic
by the identity.

Basic notions for the given development are those of type normalisation and similarity between normal types. Similarity provides a remarkable insight on isomorphism and we conjecture that, indeed, it gives a complete characterisation of type isomorphism for the system considered in the paper. We leave the proof of this conjecture as future work.

Following  D\'iaz-Caro and Dowek~\cite{DD14} we aim to extend the type assignment systems developed in~\cite{CDMZ13}  and~\cite{CDMZ13a}, by equating all isomorphic types.  This would lead to introduce equivalence rules on $\lambda$-terms, see~\cite{DD14}.
Lastly we plan to study type isomorphism in other assignment system with intersection and union types as, for instance, the ones for the lazy $\lambda$-calculus.\\

%\vspace{2mm}
\smallskip

\noindent
{\bf Acknowledgements} The authors gratefully thank the %anonymous
referees  and Alejandro D\'iaz-Caro for their numerous constructive remarks.

\bibliographystyle{eptcs}
\bibliography{biblio}

\end{document}